\documentclass[11pt]{article}
\usepackage{graphicx,amsmath,amssymb,amsthm}
\usepackage{fullpage}
\usepackage{color}

\setcounter{tocdepth}{3}
\usepackage{epstopdf}
\usepackage{epsfig}
\usepackage{tikz}
\usepackage{subfig}

\usepackage{url}

\newcommand{\perpslab}[1]{\mathrm{slab}(#1)}
\newcommand{\etal}{{\em et al.~}}

\newtheorem{theorem}{Theorem}
\newtheorem{lemma}[theorem]{Lemma}
\newtheorem{corollary}[theorem]{Corollary}

\newcommand{\changed}[1]{#1}
\newcommand{\ignore}[1]{}
\newcommand{\changedagain}[1]{{#1}}
\newcommand{\changeA}[1]{{#1}}
\newcommand{\changeAL}[1]{{#1}}

\newcommand{\changeS}[1]{{#1}}
\newcommand{\changeSH}[1]{{#1}}

\begin{document}

\title{ Self-Approaching Graphs}

\author{
Soroush Alamdari\thanks{Cornell University, Ithaca, USA {\tt alamdari@cs.cornell.edu}}
\and
Timothy M. Chan\thanks{Cheriton School of Computer Science, University of Waterloo, Waterloo, Canada
    {\tt \{tmchan, alubiw, vpathak\}@uwaterloo.ca}}
\and
Elyot Grant\thanks{Massachusetts Institute of Technology, Cambridge, USA {\tt elyot@mit.edu}}
\and
Anna Lubiw\footnotemark[2]\and
Vinayak Pathak\footnotemark[2]
}

\maketitle

\begin{abstract}
In this paper we introduce \emph{self-approaching}  graph drawings. A straight-line drawing of a graph is \emph{self-approaching} if, for any origin vertex $s$ and any destination vertex $t$, there is an $st$-path in the graph such that,
for any point $q$ on the path,
as a point $p$ moves continuously along the path from the origin to $q$, the  Euclidean distance  from $p$ to $q$ is always decreasing.
This is a more stringent condition than a greedy drawing (where only the distance between vertices on the path and the destination vertex must decrease), and guarantees that the drawing is a 5.33-spanner.

We study three topics:
(1) recognizing self-approaching drawings;
(2) constructing self-approaching drawings of a given graph;
(3) constructing a self-approaching Steiner network connecting a given set of points.

We show that: (1) there are efficient algorithms to test if a polygonal path is self-approaching in $\mathbb{R}^2$ and $\mathbb{R}^3$,
but it is NP-hard to test if a given graph drawing in $\mathbb{R}^3$ has a self-approaching $uv$-path;
(2)
we can characterize the trees that have self-approaching drawings;
(3) for any given set of terminal points in the plane, we can find a linear sized network that has a self-approaching path between any ordered pair of terminals.


\end{abstract}

\section{Introduction}
A straight-line graph drawing (or ``geometric graph'') in the plane has points for vertices, and straight line segments for edges, where the weight of an edge is its Euclidean length.  The drawing need not be planar.
Rao \etal\cite{Rao:GeoRouting:2003} introduced the idea of greedy drawings.  A \emph{greedy drawing} of a graph is a straight-line drawing in which, for each origin vertex $s$ and destination vertex $t$, there is a neighbor of $s$ that is closer to $t$ than $s$ is, i.e., there is a \emph{greedy} $st$-path $P=(s=p_1,p_2,\ldots,p_k = t)$ such that the Euclidean distances $D(p_i, t)$ decrease as $i$ increases.
This idea has 
attracted great interest in recent years (e.g. \cite{Angelini:2009,Bose:theta6:2012,Goodrich:2008,He:2011,Leighton:2010,Papadimitriou:2005}) mainly because a greedy drawing of a graph permits greedy local routing.

It is a very natural and desirable property that a path should always get closer to its destination, but there is more than one way to define this.
Although every vertex along a greedy path gets closer to the destination, the same is not true of intermediate points along edges.
See Figure~\ref{fig:greedy-vs-SA}.

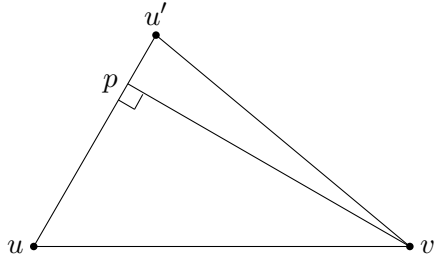
\begin{figure}
\centering

\begin{tikzpicture}[scale=2.5]
\draw (0,0) -- (2, 0);
\draw (0.5, 0.866) node[left] {$p$} -- (2,0);
\draw (0,0) -- (0.65, 1.125);
\draw (2, 0) -- (0.65, 1.125);

\fill[black] (0, 0) node[left] {$u$} circle (0.02);
\fill[black] (2, 0) node[right] {$v$} circle (0.02);
\fill[black] (0.65, 1.125) node[above] {$u'$} circle (0.02);

\draw (0.45, 0.78) -- (0.536, 0.73);
\draw (0.536, 0.73) -- (0.58, 0.81);
\end{tikzpicture}

\caption{As we move from $u$ towards $u'$, distance to $v$ first decreases (until $p$), then increases. However, $D(u', v) < D(u, v)$.}
\label{fig:greedy-vs-SA}
\end{figure}


Another disadvantage of greedy paths is that the
length of a greedy path is not bounded in terms of the Euclidean distance between the endpoints.
This is another natural and desirable property for a path to have, and is captured by the
\emph{dilation} (or ``stretch factor'') of a graph drawing---the maximum, over vertices $s$ and $t$, of the ratio of their distance in the graph to their Euclidean distance.
The dilation factor of greedy graph drawings can be unbounded.

Icking \etal\cite{Icking:self-approachingcurves:1995} introduced a stronger notion of
 ``getting closer'' to a destination, that addresses both shortcomings of greedy paths.
A curve from $s$ to $t$ is \emph{self-approaching} if for any three points $a, b, c$ appearing in that order along the curve, we have $D(a,c) \ge D(b,c)$.
Icking \etal proved that a self-approaching curve has \emph{detour} at most 5.3332, where the \emph{detour} or \emph{geometric dilation} of a curve is the supremum over points $p$ and $q$ on the curve, of the ratio of their distance along the curve to their Euclidean distance $D(p,q)$.
This is stronger than dilation in that we consider all pairs of points, not just all pairs of vertices.

In this paper we introduce the notion of a \emph{self-approaching} graph drawing---a straight-line drawing  that contains, for every pair of vertices $s$ and $t$,
a self-approaching $st$-path and a self-approaching $ts$-path (which need not be the same).
We also explore the related notion of an \emph{increasing-chord} graph drawing, which has the stronger property that every pair of vertices is joined by a path that is self-approaching in both directions.
Rote~\cite{Rote:ICcurves:1994} proved that increasing-chord paths have geometric dilation  at most 2.094.

Our first result is a linear time algorithm to recognize a self-approaching polygonal path in the plane.
This extends to $\mathbb{R}^3$, with some slow-down---we give an algorithm that runs in time $O(n \log^2 n / \log \log n)$ and a lower bound of $\Omega(n \log n)$.  This is in Section~\ref{sec:testingPaths}.

We do not know the complexity of recognizing self-approaching graph drawings in the plane or higher dimensions.
One approach would be to find, for every pair of vertices $u$ and $v$, a self-approaching path from $u$ to $v$ in the graph drawing.   This problem is open in $\mathbb{R}^2$ but we show that it  is NP-hard in $\mathbb{R}^3$.  This is in Section~\ref{sec:SAPathsInGraphs}.
%

Next, we consider the question of constructing a self-approaching drawing for a given graph.  We give a linear time algorithm to recognize the trees that have self-approaching drawings.  See Section~\ref{sec:SADrawability}.

Finally,  we consider the problem of connecting a given set of terminal points in the plane by a network that has a self-approaching path between every pair of terminals.  We show that this can be done with a linear sized network.  See Section~\ref{sec:SAspanners}.
\section{Background}

A \emph{spanner} is a graph of bounded dilation.  Spanners have been very well-studied---see for example the book by Narasimhan and Smid~\cite{Spanners} and the survey by Eppstein~\cite{Eppstein:Spanners:2000}.
A main goal is to efficiently construct a spanner on a given set of points, with the objective of minimizing dilation while keeping
 the number or total length of edges small.  For recent results, see, e.g.,~\cite{Aronov:Spanners:2008,Giannopoulos:hard-spanners:2010}.
 If Steiner vertices are allowed, their number should also be minimized, and different versions of the problem arise if we include the Steiner points in measuring the dilation, see~\cite{Ebbers:Dilation:2007}.

The \emph{detour} of a graph drawing is defined to be the supremum, over all points $p,q$ of the drawing (whether at vertices, or interior to edges) of the ratio of their distance in the graph to their Euclidean distance.  Note  that if two edges cross in the drawing, then the detour is infinite.  By contrast,  a self-approaching drawing may have crossing edges, for example, any complete geometric graph is self-approaching.
Constructing a network to minimize detour has also been considered~\cite{Ebbers:detour:2006,Dumitrescu:2009}, though not as extensively as spanners.

Relevant background on greedy drawings is as follows.
Answering a conjecture of Papadimitriou and Ratajczak~\cite{Papadimitriou:2005}, Leighton and Moitra~\cite{Leighton:2010}
\changeS{and Angelini \etal\cite{Angelini:2009} independently showed that any 3-connected planar graph has a greedy drawing. However, the number of bits needed for the coordinates in these embeddings is large for routing purposes. Goodrich and Strash~\cite{Goodrich:2008} showed how to find a greedy path in such drawings without storing the actual coordinates, but instead using local information of small size.}
Moitra \cite{Moitra:Thesis:2009} used combinatorial conditions to classify the trees that have greedy embeddings and \changeSH{very recently N{\"o}llenburg and Prutkin \cite{Nollenburg:2013} completely characterized greedy drawable trees.}
Connecting the ideas of greedy drawings and spanners,
Bose \etal\cite{Bose:theta6:2012} showed that every triangulation has an embedding in which local routing produces a path of bounded dilation.

Self-approaching drawings are related to \emph{monotone drawings} in which, for every pair of vertices $s$ and $t$,  there is an $st$-path that is monotone in some direction.  This concept was introduced by Angelini, et al.,~\cite{Angelini:MonoDraw:2012} who
gave algorithms to construct monotone planar drawings of trees and planar biconnected graphs.   A follow-up paper~\cite{Angelini:MonoFixed:2011} considers the case where a planar embedding is specified.  Self-approaching drawings are not necessarily monotone, and monotone drawings are not necessarily self-approaching.   The one relationship is that any increasing-chord drawing is  a monotone drawing.


Although a monotone path need not be self-approaching, there is a stronger condition that does imply self-approaching, namely that
the path is monotone in both the $x$- and $y$-directions.
Thus, a network with an $xy$-monotone path between every pair of terminals is a self-approaching network.  A \emph{Manhattan network} has horizontal and vertical edges and includes an $L_1$ shortest path between every pair of terminals.   So a Manhattan network is self-approaching.
There is considerable work on finding Manhattan networks of minimum total length (so-called ``minimum Manhattan networks").  There are efficient algorithms with approximation factor 2, and the problem has been shown to be NP-hard~\cite{Chin:MMN-NP-complete:2011}.
More relevant to us is the result of Gudmundsson et al.~\cite{Gudmundsson:smallManhattan:2007} that every point set admits a Manhattan network of $O(n \log n)$ vertices and edges, and there are point sets for which any Manhattan network has size at least $\Omega(n \log n)$.
{\changed This contrasts with our result that every point set admits a self-approaching network of linear size.}

For results on computing the dilation or detour of a path or graph, see the survey by
 Gudmundsson and Knauer~\cite{Gudmundsson:Dilation:2007} and the paper by Wulff-Nilsen~\cite{Wulff-Nilsen:Detour:2010}.

The Delaunay triangulation has several good properties that are relevant to us: it has dilation factor below 2~\cite{Xia:Delaunay:2011}, and is a greedy drawing~\cite{Bose:routing:2004}, although greedy paths in a Delaunay triangulation do not necessarily have bounded dilation.
It is natural to conjecture that the Delaunay triangulation is self-approaching, but we show that this is not the case.


\section{Preliminaries}
We let $D(u,v)$ denote the Euclidean distance between points $u$ and $v$ in $\mathbb{R}^d$.
{\changed Formally, a
\emph{curve} is a continuous function $f\colon[0,1]\rightarrow \mathbb{R}^d$, and an $st$\emph{-curve} is a curve $f$ with $f(0)=s$ and $f(1)=t$.
The \emph{reverse curve} is $f(1-t), t \in [0,1]$.
For convenience, we will identify a curve with its image, and ignore the particular parameterization.  When we speak of points $a$ and $b$ \emph{in order along the curve}, or with $b$ \emph{later than} $a$ on the curve, we mean that $a=f(t_1)$ and $b=f(t_2)$ for some $0 \le t_1 \le t_2 \le 1$.
A curve is \emph{self-approaching} if for any three points $a,b,c$ in order along the curve, we have $D(a,c)\geq D(b,c)$. See Figure~\ref{fig:exampleSAICcurve}(a).
%
Note that this definition is sensitive to the direction of the curve---it may happen that a curve is self-approaching but its reverse is not.}

{\changed A curve has \emph{increasing chords} if for any four points $a,b,c,d$ in order along the curve we have $D(a,d) \ge D(b,c)$. See Figure~\ref{fig:exampleSAICcurve}(b) for an example.
Note that if a curve has increasing chords then the reverse curve also has increasing chords, and the curve and its reverse are both self-approaching.
The converse also holds: if a curve and its reverse are both self-approaching then the  curve has increasing chords, as we then have $D(a,d) \ge D(a,c) \ge D(b,c)$ for any points $a,b,c,d$ in order along the curve.}

\begin{figure}[htbp]
\begin{center}
\hspace*{\fill}
\subfloat[]{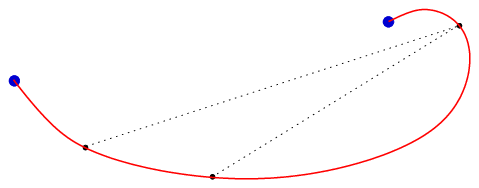}
\hspace*{\fill}
\subfloat[]{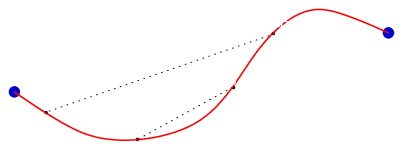}
\hspace*{\fill}
\end{center}
\caption{(a) A self-approaching $st$-curve and (b) an increasing-chord curve in $\mathbb{R}^2$.}
\label{fig:exampleSAICcurve}
\end{figure}



The following characterization of self-approaching curves is straightforward:
\begin{lemma}(\cite{Icking:self-approachingcurves:1995})
\label{lem:perpendSA}
{\changed A piecewise-smooth curve is self-approaching iff for each point $p$ on the curve, the line perpendicular to the curve at $p$ does not intersect the curve at a later point.}
\end{lemma}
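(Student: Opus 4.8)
**The plan is to prove both directions of the iff by differential calculus, tracking how the squared distance function behaves along the curve.**

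First I would set up the machinery. Parameterize a piecewise-smooth arc of the curve as $f(s)$, $s\in[0,1]$, and for a fixed later point $c=f(1)$ consider the function $g(s) = D(f(s),c)^2 = \langle f(s)-c, f(s)-c\rangle$. Then $g'(s) = 2\langle f'(s), f(s)-c\rangle$. The key observation is that self-approachingness of the curve is exactly the statement that for every pair $b=f(s_2)$ later than $a=f(s_1)$ and every $c$ still later, $D(a,c)\ge D(b,c)$; equivalently, fixing the later endpoint $c$, the function $s\mapsto D(f(s),c)$ is non-increasing on the portion of the curve preceding $c$. So I would first reduce the lemma to: \emph{the curve is self-approaching iff, for every point $c$ on the curve and every earlier point $p=f(s)$, we have $g'(s)\le 0$, i.e. $\langle f'(s), f(s)-c\rangle \le 0$.} The ``$\Leftarrow$'' of this reduction is just integration (a function with nonpositive derivative is non-increasing, handled piecewise across the finitely many non-smooth junctions); the ``$\Rightarrow$'' is that a non-increasing function has nonpositive derivative wherever differentiable.

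Next I would connect the condition $\langle f'(s), f(s)-c\rangle \le 0$ for all later $c$ to the perpendicular-line condition. Fix $p=f(s)$ on a smooth piece with nonzero tangent $f'(s)$ (the degenerate zero-tangent case and the non-smooth junction points I would handle by a limiting/continuity argument, or note they form a measure-zero set that does not affect monotonicity). The perpendicular line (or hyperplane) to the curve at $p$ is $\{x : \langle f'(s), x-p\rangle = 0\}$; it splits $\mathbb{R}^d$ into the open half containing points ``behind'' $p$ (where $\langle f'(s), x-p\rangle < 0$) and the half ``ahead''. Writing $c = p + (c-p)$, the inequality $\langle f'(s), f(s)-c\rangle \le 0$ says exactly $\langle f'(s), c-p\rangle \ge 0$, i.e. $c$ lies in the closed half-space ahead of (or on) the perpendicular hyperplane at $p$. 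Thus ``$g'(s)\le 0$ for every later $c$'' is equivalent to ``no point of the curve later than $p$ lies strictly behind the perpendicular hyperplane at $p$,'' which — after checking that a curve point strictly on the hyperplane causes no violation but a point crossing to the wrong side does — is equivalent to ``the perpendicular hyperplane at $p$ does not meet the curve at a later point'' in the sense intended (the statement should be read as: the later portion of the curve stays weakly on the forward side, equivalently the perpendicular does not \emph{cross} to meet it). I would state this equivalence cleanly and note it is symmetric in how one quantifies over $p$ and $c$.

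The main obstacle is the bookkeeping at the boundary between the pointwise derivative condition and the global three-point condition: one must be careful that ``self-approaching'' quantifies over \emph{all} triples, so the derivative condition must hold simultaneously for all choices of the ``future'' point $c$, and conversely a single bad perpendicular at some $p$ must be converted back into an explicit violating triple $(a,b,c)$ — here I would take $b=p$, let $c$ be the offending later intersection point, and choose $a$ slightly before $b$ so that $D(a,c) < D(b,c)$ follows because $g$ is strictly increasing just before $b$ (using $g'(s)>0$ there). Piecewise-smoothness guarantees there are only finitely many corner points, so the monotonicity argument glues together without trouble, and the few points with vanishing tangent can be absorbed by continuity of $D(f(\cdot),c)$. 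Once these routine continuity/limiting checks are in place, the equivalence chain ``self-approaching $\Leftrightarrow$ $D(f(\cdot),c)$ non-increasing for all $c$ $\Leftrightarrow$ $\langle f'(s),f(s)-c\rangle\le 0$ for all $s$ and all later $c$ $\Leftrightarrow$ perpendicular at each $p$ avoids later curve points'' completes the proof.
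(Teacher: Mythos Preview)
The paper does not actually prove this lemma; it cites \cite{Icking:self-approachingcurves:1995} and remarks only that the characterization is ``straightforward,'' then immediately specializes it to polygonal paths in Corollary~\ref{cor:edgeperpend}. Your derivative-of-squared-distance argument is the standard proof and is correct.

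Two minor remarks. First, your construction of a violating triple when the perpendicular at $p=f(s)$ meets a later point $c$ takes $b=p$ and $a$ slightly earlier; this breaks if $p$ is the initial point of the curve, but the obvious fix (take $a=p$ and $b=f(s')$ for $s'$ slightly greater than $s$, using $g'(s)>0$ to get $D(a,c)<D(b,c)$) works identically. Second, you are right to flag that the statement should be read as ``every later point lies in the closed forward half-space of the perpendicular at $p$''; this is exactly how the paper uses it in Corollary~\ref{cor:edgeperpend}, where the half-plane $l_{v_{i-1}v_i}^+$ is explicitly defined to be closed.
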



\begin{corollary}
\label{lem:perpendIC}
{\changed A piecewise-smooth curve has increasing chords iff each line perpendicular to the curve intersects the curve at no other point.}
\end{corollary}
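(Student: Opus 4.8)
The plan is to obtain this corollary as an immediate bidirectional reduction to Lemma~\ref{lem:perpendSA}, using the equivalence already noted above: a curve has increasing chords if and only if both the curve $f$ and its reverse curve $\bar f$ are self-approaching. The one geometric fact I would isolate first is the \emph{direction-independence of perpendicularity}: the family of lines perpendicular to a piecewise-smooth curve at a point $p$ depends only on the one-sided tangent direction(s) at $p$, and reversing the curve merely swaps the two one-sided tangents at $p$ without changing the set of directions orthogonal to them. Consequently $f$ and $\bar f$ have exactly the same perpendicular lines at each point, and moreover a point that lies ``later than $p$'' along $\bar f$ is precisely a point that lies ``earlier than $p$'' along $f$.

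For the forward direction I would assume $f$ has increasing chords, so both $f$ and $\bar f$ are self-approaching and Lemma~\ref{lem:perpendSA} applies to each. Applied to $f$, it says that at each point $p$ no perpendicular line meets $f$ later than $p$; applied to $\bar f$ and translated back via the remark above, it says no perpendicular line at $p$ meets $f$ earlier than $p$. Together these give that the perpendicular line at $p$ meets the curve at no other point. For the converse, assume that every line perpendicular to $f$ meets the curve only at its point of tangency. In particular no perpendicular line at a point $p$ meets $f$ later than $p$, so Lemma~\ref{lem:perpendSA} makes $f$ self-approaching; since $f$ and $\bar f$ have the same perpendicular lines, the identical hypothesis holds for $\bar f$, so $\bar f$ is self-approaching as well, and hence $f$ has increasing chords.

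The only place I would be careful in a full write-up is the treatment of non-smooth points of the curve, where ``the line perpendicular to the curve at $p$'' must be read as ``any perpendicular line at $p$'' (a line normal to either one-sided tangent, or any line in the normal cone). One needs to confirm that Lemma~\ref{lem:perpendSA} is invoked with the same reading and that the forward/reverse swap of one-sided tangents leaves the set of normal lines at $p$ unchanged, which is exactly the direction-independence observation above. Apart from this bookkeeping, there is no real obstacle: the corollary follows formally from Lemma~\ref{lem:perpendSA} and the characterization of increasing chords in terms of two-sided self-approaching.
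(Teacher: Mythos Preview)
Your proposal is correct and follows exactly the intended route: the paper states this result as an immediate corollary of Lemma~\ref{lem:perpendSA} without writing out a proof, implicitly relying on the already-noted equivalence that a curve has increasing chords iff both it and its reverse are self-approaching. Your write-up simply makes this deduction explicit, including the observation that reversing the curve does not change the set of perpendicular lines at each point.
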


When dealing with straight-line drawings of graphs, we apply Lemma~\ref{lem:perpendSA} to piecewise-linear curves.
\changeA{
For distinct points $u$ and $v$, let $\overline{uv}$ be the line passing through $u$ and $v$.
See Figure~\ref{fig:ell}.
Let $l_{uv}$ denote the line that passes through $v$ and is perpendicular to $\overline{uv}$, noting that $l_{uv}$ and $l_{vu}$ are distinct parallel lines.
Let $l_{uv}^+$ denote the closed half-plane that has boundary $l_{uv}$ and does not contain $u$, and define $l_{vu}^+$ similarly.
Let $\perpslab{uv}$ be the open strip bounded by $l_{uv}$ and $l_{vu}$, in other words,
the complement of $l_{uv}^+ \cup l_{vu}^+$.
With this notation, we can restate the lemma as follows:}

\begin{figure}[htb]
\centering
\includegraphics[width=2in]{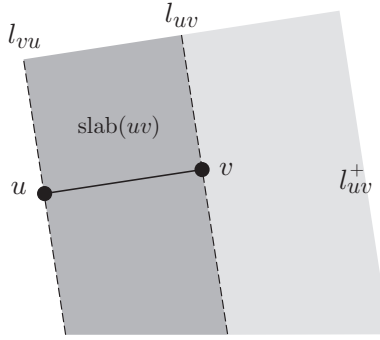}
\caption{
\changeA{The lines $l_{vu}$ and $l_{uv}$, the open $\perpslab{uv}$ (darkly shaded), and the closed half-plane $l_{uv}^+$ (lightly shaded).}
}
\label{fig:ell}
\end{figure}

\begin{corollary}
\label{cor:edgeperpend}
Let $P=(v_1,v_2,\ldots,v_n)$ be a directed path embedded in $\mathbb{R}^2$ via straight line segments.  Then, $P$ is self-approaching iff for all $1 < i < j \leq n$, the point $v_j$ lies in $l_{v_{i-1}v_i}^+$.  Equivalently, $P$ is self-approaching iff for all $1 < i \leq n$, the convex hull of $\{v_i,v_{i+1},\ldots,v_n\}$ lies in $l_{v_{i-1}v_i}^+$.
\end{corollary}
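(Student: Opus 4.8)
The plan is to pass through the three-point definition of self-approaching and exploit the convexity of ``distance to a fixed point'' along a line segment; this is essentially a discrete incarnation of Lemma~\ref{lem:perpendSA}. Fix a point $c$ on $P$ and let $\phi_c(p)=D(p,c)$. With $c$ held fixed, the three-point condition $D(a,c)\ge D(b,c)$ says precisely that $\phi_c$ is non-increasing along $P$ on the portion from $v_1$ up to $c$; so $P$ is self-approaching iff this holds for every $c$. Since $P$ is polygonal, that portion splits into one (sub)segment per edge $v_{i-1}v_i$ lying before $c$ (the last one possibly truncated at $c$), and a concatenation of non-increasing continuous functions is non-increasing, so the condition is equivalent to: $\phi_c$ is non-increasing on each such edge, for every $c$.

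Next I would analyze a single edge. On $v_{i-1}v_i$ the function $\phi_c$ is convex, being the distance from the fixed point $c$ to a point moving along a line, and a convex function on an interval is non-increasing iff it is non-increasing at the right endpoint (iff its left derivative there is $\le 0$). A short computation shows that, for $c\neq v_i$, $\phi_c$ is non-increasing at $v_i$ iff $(v_i-c)\cdot(v_i-v_{i-1})\le 0$, which by the definitions of $l_{v_{i-1}v_i}$ and $l_{v_{i-1}v_i}^+$ is exactly the statement $c\in l_{v_{i-1}v_i}^+$; and if $c=v_i$ then $\phi_c$ is trivially non-increasing on the edge while $v_i\in l_{v_{i-1}v_i}$ anyway. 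Combining with the first paragraph, $P$ is self-approaching iff for every $i$ with $1<i\le n$ and every point $c$ of $P$ lying after $v_i$, we have $c\in l_{v_{i-1}v_i}^+$.

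Finally I would repackage the quantifier over $c$. For fixed $i$, the points of $P$ lying after $v_i$ are exactly the sub-path $v_iv_{i+1}\cdots v_n$, whose convex hull is $\mathrm{conv}\{v_i,\dots,v_n\}$; since $l_{v_{i-1}v_i}^+$ is a closed half-plane, all those points lie in it iff $\mathrm{conv}\{v_i,\dots,v_n\}\subseteq l_{v_{i-1}v_i}^+$, iff $v_j\in l_{v_{i-1}v_i}^+$ for all $i\le j\le n$, iff (using $v_i\in l_{v_{i-1}v_i}$) $v_j\in l_{v_{i-1}v_i}^+$ for all $i<j\le n$. Ranging $i$ from $2$ to $n$ yields both formulations in the statement. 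The one step needing care is the reduction, for a fixed edge, from the continuum of perpendicular lines at its interior points (the form in which Lemma~\ref{lem:perpendSA} is naturally invoked) down to the single extreme line $l_{v_{i-1}v_i}$ and then to the finitely many vertices $v_{i+1},\dots,v_n$; the convexity of $\phi_c$ on the edge is exactly what powers this reduction, and it also lets us avoid any separate limiting argument at the breakpoints of $P$.
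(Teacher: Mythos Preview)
Your proof is correct. The paper does not actually give a separate argument for this corollary: it is presented as an immediate restatement of Lemma~\ref{lem:perpendSA} in the polygonal setting, with the passage from the continuum of perpendiculars along an edge down to the single boundary line $l_{v_{i-1}v_i}$ (and then to the finitely many vertices) left implicit. Your convexity argument for $\phi_c$ along an edge is precisely what justifies that reduction, so your write-up is the same idea made explicit rather than a genuinely different route; the only thing the paper adds is the observation that the vertex condition is equivalent to the convex-hull condition, which you also cover in your final paragraph.
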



Analogous characterizations are also possible in higher dimensions, with the half-planes $l_{v_{i-1}v_i}^+$ replaced by half-spaces bounded by hyperplanes orthogonal to $\overline{v_{i-1}v_i}$.


\section{Testing whether paths are self-approaching}
\label{sec:testingPaths}
Corollary~\ref{cor:edgeperpend} implicitly suggests an algorithm to determine whether a directed path embedded in 
Euclidean space is self-approaching.  In this section, we provide \ignore{improved} algorithms for this task in two and three dimensions, as well as a lower bound.  We assume a real RAM model in which all simple geometric operations can be performed in $O(1)$ time, and we assume that a straight-line drawing of a path $P=(v_1,v_2,\ldots,v_n)$ is represented explicitly as a list of $n$ points (requiring $O(n)$ space).
\begin{theorem}
\label{thm:pathSolver}
Given a straight-line drawing of a path $P=(v_1,v_2,\ldots,v_n)$ in the plane, it is possible to test whether $P$ is self-approaching in linear time.
\end{theorem}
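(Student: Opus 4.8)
The plan is to apply Corollary~\ref{cor:edgeperpend} in its second form. Write $H_i$ for the convex hull of $\{v_i,v_{i+1},\ldots,v_n\}$; then $P$ is self-approaching iff $H_i\subseteq l_{v_{i-1}v_i}^+$ for every $i$ with $1<i\le n$ (the case $i=n$ being automatic, since $v_n\in l_{v_{n-1}v_n}$). I would verify these containments in a single sweep that processes the vertices in reverse order $v_n,v_{n-1},\ldots,v_2$, maintaining the boundary of the current hull as a doubly linked list, together with a pointer to the location of the most recently inserted vertex $v_i$ on that boundary. (We may assume, after a linear-time preprocessing pass, that no two consecutive vertices coincide, since a zero-length edge affects neither the curve as a point set nor which ordered triples lie on it.) The structural fact that makes the sweep cheap is that, as long as no violation has yet been detected, $v_i$ always lies on $\partial H_i$: the already-verified containment at index $i{+}1$ says every $v_k$ with $k\ge i{+}2$ lies in $l_{v_iv_{i+1}}^+=\{x:(x-v_{i+1})\cdot(v_{i+1}-v_i)\ge 0\}$, so $v_{i+1}$ minimizes the linear functional $x\mapsto x\cdot(v_{i+1}-v_i)$ over $\{v_{i+1},\ldots,v_n\}$ and hence lies on $\partial H_{i+1}$; the same argument with $i{+}1$ replaced by $i$ gives $v_i\in\partial H_i$.

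The step that handles $v_i$ does two things. First it inserts $v_i$ into $H_{i+1}$ to obtain $H_i$. A point $v_i+t(v_{i+1}-v_i)$ with $t<1$ satisfies $(x-v_{i+1})\cdot(v_{i+1}-v_i)=(t-1)D(v_i,v_{i+1})^2<0$, so the open segment $v_iv_{i+1}$ lies strictly in the complement of $l_{v_iv_{i+1}}^+\supseteq H_{i+1}$; hence $v_i$ lies strictly outside $H_{i+1}$ and the segment $v_iv_{i+1}$ touches $H_{i+1}$ only at $v_{i+1}$, so $v_{i+1}$ lies on the portion of $\partial H_{i+1}$ visible from $v_i$. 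We can therefore run the standard incremental convex-hull insertion starting from the stored pointer to $v_{i+1}$: walk along $\partial H_{i+1}$ in both directions, deleting every edge visible from $v_i$, and attach the two new edges joining $v_i$ to the two tangency vertices; this costs $O(1)$ plus $O(1)$ per deleted edge, and $v_i$ becomes a vertex of $H_i$. Second, with the two hull-neighbours $a,b$ of $v_i$ in hand, we test the containment at index $i$. Let $f(x)=(x-v_i)\cdot(v_i-v_{i-1})$, so that $l_{v_{i-1}v_i}^+=\{f\ge 0\}$ and $f(v_i)=0$. Because $f$ is linear, its restriction to $\partial H_i$ has a single (possibly flat) local minimum, so $v_i$ is a global minimizer of $f$ over $H_i$ as soon as $f$ does not decrease along either hull edge at $v_i$; thus $H_i\subseteq l_{v_{i-1}v_i}^+$ holds iff $f(a)\ge 0$ and $f(b)\ge 0$, an $O(1)$ check. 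If it fails we report that $P$ is not self-approaching; otherwise we reset the pointer to $v_i$ and continue with $v_{i-1}$ (with the obvious handling of the initial steps, where $H_i$ is a point or a segment).

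Correctness is then immediate from Corollary~\ref{cor:edgeperpend}. For the running time, each of the $O(n)$ steps does $O(1)$ work apart from edge deletions; since at most two hull edges are created per step and each edge is deleted at most once, the total deletion cost is $O(n)$, so the whole algorithm runs in $O(n)$ time. I expect the main obstacle to lie in the careful treatment of degeneracies in the extreme-point arguments: when several suffix vertices lie on a common supporting line, $v_{i+1}$ can fall in the relative interior of a hull edge rather than at a corner, which must be accommodated both in the ``$v_{i+1}$ is visible from $v_i$'' claim and in the linked-list bookkeeping; such cases can be absorbed either by a symbolic perturbation or by allowing collinear ``flat'' vertices on the hull, without affecting the $O(n)$ bound.
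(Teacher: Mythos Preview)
Your proposal is correct and follows essentially the same approach as the paper: process the vertices in reverse order, maintain the convex hull of the suffix, use the fact that (while no violation has occurred) the current vertex sits on the hull boundary so that the half-plane containment test reduces to checking its two hull neighbours, and update the hull by a Graham-scan-style insertion whose total cost is $O(n)$ by the usual amortized edge-deletion argument. Your write-up is somewhat more explicit than the paper's about why the two-neighbour test suffices and why the incremental insertion can start from the stored pointer, and you treat degeneracies more carefully, but the algorithm and its analysis are the same.
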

\begin{proof}
By Corollary~\ref{cor:edgeperpend}, we must only check that for all $1<i\leq n$, the convex hull of $\{v_{i},\ldots , v_n\}$ lies in $l_{v_{i-1}v_i}^+$.  We can do all of these checks in $O(n)$ time by performing them iteratively, beginning with $i=n$ and processing the points in decreasing order.  While doing this, we will either show that $P$ is not self-approaching, or we will be able to use the properties of self-approaching paths to construct the convex hull of the traversed vertices incrementally in linear total time by an algorithm similar to Graham's scan~\cite{Graham}.

We now describe a step of the algorithm.  Assume that the directed path $P_i = \{v_i,\ldots,v_n\}$ is self-approaching and assume the convex hull $C$ of vertices \{$v_i,\ldots , v_n$\} has already been computed and is stored by keeping track of the neighbors of each vertex on its boundary.  Since $P_i$ is self-approaching, point $v_i$ must lie on the boundary of $C$ (by Corollary~\ref{cor:edgeperpend}).  Let $v_i^1$ and $v_i^2$ be the neighbors of $v_i$ in $C$.
\changeA{Note that $C$ lies in $l_{v_{i-1}v_i}^+$ if and only if it does not intersect $\perpslab{v_{i-1}v_i}$ and that happens if and only if the line segments $\overline{v_iv_i^1}$ and $\overline{v_iv_i^2}$ do not intersect $\perpslab{v_{i-1}v_i}$.}
We can check this in $O(1)$ time.
If an intersection is found, then $P$ is not self-approaching and we can terminate the algorithm.  Otherwise, we add $v_{i-1}$ to $C$ and recompute the convex hull.  This can be done by repeatedly removing the vertices of $C$ on both sides of $v_i$ until convex angles are obtained\ignore{, after which the convex hull of $C \cup \{v_{i-1}\}$ will remain}.  Each vertex in $P$ will be removed at most once from a convex hull in some step of the algorithm, so the total running time for all steps of the algorithm is $O(n)$.
\end{proof}


In three dimensions, we can obtain a similar result with slightly worse running time using an existing convex hull data structure that supports point insertion and half-space range emptiness queries.
\begin{theorem}
\label{thm:pathSolver3D}
Given a straight-line drawing of a path $P=(v_1,v_2,\ldots,v_n)$ in $\mathbb{R}^3$, it is possible to test whether $P$ is self-approaching in $O(n \log^2 n/\log\log n)$ time.
\end{theorem}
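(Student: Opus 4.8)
The plan is to mimic the incremental approach of Theorem~\ref{thm:pathSolver}, but replace the ad hoc Graham-scan-style convex hull maintenance—which relied on planarity—with an off-the-shelf dynamic convex hull data structure in $\mathbb{R}^3$. By the three-dimensional analogue of Corollary~\ref{cor:edgeperpend} (noted immediately after it in the excerpt), $P$ is self-approaching if and only if for every $1 < i \le n$, the convex hull $C_i$ of $\{v_i, v_{i+1}, \ldots, v_n\}$ lies in the closed half-space $l_{v_{i-1}v_i}^+$, equivalently $C_i$ does not meet the open slab $\perpslab{v_{i-1}v_i}$. Processing $i$ from $n$ down to $2$, I maintain $C_i$ in a data structure that supports (a) inserting the point $v_{i-1}$ to pass from $C_i$ to $C_{i-1}$, and (b) testing whether the current hull intersects a query open slab. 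A slab-emptiness test can be phrased as two half-space range emptiness queries: $C_i \cap \perpslab{v_{i-1}v_i} = \emptyset$ iff $C_i \subseteq l_{v_{i-1}v_i}^+ \cup l_{v_i v_{i-1}}^+$, and since $C_i$ is connected and these are two disjoint closed half-spaces whose union misses the slab, this holds iff $C_i$ lies entirely in one of them—so it suffices to check that the half-space $l_{v_{i-1}v_i}^-$ (the open side containing the slab on the $v_i$ side) contains no vertex of $C_i$ other than possibly on its boundary, i.e.\ a half-space range emptiness query. Actually the cleanest formulation: $v_i$ itself must be a vertex of $C_i$ and lie on $l_{v_{i-1}v_i}$'s side, so we simply ask whether any point of $C_i$ lies strictly inside $\perpslab{v_{i-1}v_i}$, which is the conjunction of emptiness for the two open half-spaces $\mathbb{R}^3 \setminus l_{v_{i-1}v_i}^+$ and $\mathbb{R}^3 \setminus l_{v_i v_{i-1}}^+$; each is a half-space emptiness query against the point set $\{v_i,\ldots,v_n\}$.

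For the data structure I would invoke the semi-dynamic (insertion-only) convex hull structure of Chan~\cite{Chan} (or equivalently a structure built from Kaplan–Tarjan-style techniques) that supports point insertions and half-space range emptiness / extreme-point-in-a-direction queries in three dimensions. The relevant bound is amortized $O(\log^2 n / \log\log n)$ per operation: over the whole run we perform $n-1$ insertions and $O(n)$ slab queries, for a total of $O(n \log^2 n / \log\log n)$, matching the claimed running time. If an intersection with a slab is ever detected, we report that $P$ is not self-approaching and halt; if all $n-1$ checks pass, $P$ is self-approaching by the corollary.

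The main obstacle is not conceptual but bookkeeping: I must make sure the query I feed the data structure is exactly a half-space emptiness query over the \emph{current} point set (which grows as $i$ decreases), and that the open-versus-closed boundary issue—$v_i$ and possibly $v_{i-1}$ lie \emph{on} $l_{v_{i-1}v_i}$ and $l_{v_i v_{i-1}}$ respectively—is handled so that we only flag a violation when some vertex lies in the \emph{open} slab. This is a matter of perturbing the query half-spaces slightly inward, or of using strict versus non-strict comparisons consistently; it does not affect the asymptotic cost. A secondary point to verify is that the cited dynamic hull structure genuinely supports the half-space emptiness query (some variants only report extreme points in a query direction, but that is equally sufficient here: the slab is empty iff the extreme vertex of $C_i$ in the direction $\overline{v_i v_{i-1}}$ lies in $l_{v_{i-1}v_i}^+$ and the extreme vertex in the opposite direction lies in $l_{v_i v_{i-1}}^+$). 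With those details pinned down, the running time follows immediately from the per-operation bound of the data structure.
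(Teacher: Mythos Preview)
Your proposal is correct and follows essentially the same approach as the paper: process vertices in reverse, maintain an insertion-only convex hull, and test each slab via two half-space range emptiness (equivalently, extreme-point-in-direction) queries at $O(\log^2 n/\log\log n)$ amortized cost per operation. The only difference is cosmetic---the paper assembles the data structure explicitly from a $b$-ary Bentley--Saxe logarithmic method with $b=\log n$, Kirkpatrick's planar point location for the static half-space emptiness structure, and Chazelle's linear-time hull merge, whereas you invoke a black-box semi-dynamic hull structure; the algorithm and analysis are otherwise identical.
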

\begin{proof}
The proof is analogous to that of Theorem~\ref{thm:pathSolver}, with the only change being that we must employ a more complicated data structure to store the convex hull and test whether it intersects a given half-space range.  For each edge $v_{i-1}v_i$, we can ensure that $\perpslab{v_{i-1}v_i}$ does not intersect the convex hull $C$ by performing two half-space range emptiness queries on $C$.  If no intersection is found, then we may insert point $v_{i-1}$ to our data structure and perform the next iteration of the algorithm.  If the algorithm successfully inserts all points into $C$, then the path $P$ must be self-approaching.

Achieving the stated running time requires a nontrivial data structure combining several known ideas.  There is a static data structure for half-space range emptiness in $\mathbb{R}^3$ with $O(n)$ space and $O(\log n)$ query time, by reduction to {\em planar point location\/} in dual space \cite{Kir:pl}; the preprocessing time is $O(n)$ if we are given the convex hull. The static data structure can be transformed into a semidynamic data structure with $O(b\log_b n)$ amortized insertion time and $O(\log_b n\log n)$ query time for a given parameter $b$, by known techniques---namely, a $b$-ary version of Bentley and Saxe's {\em logarithmic method\/}~\cite{BenSax}, using Chazelle's linear-time algorithm for merging two convex hulls~\cite{Cha:merge} as a subroutine. By setting $b=\log n$, both amortized insertion time and query time are bounded by $O(\log^2 n/\log\log n)$, yielding the desired result.
\end{proof}

Next, we show that Theorem~\ref{thm:pathSolver3D} is tight up to a factor of $\log n/\log\log n$ by proving a lower bound of $\Omega(n \log n)$ on the running time of any algorithm for determining whether a directed path embedded in $\mathbb{R}^3$ is self-approaching.  We do this by reducing from the \emph{set intersection problem}, for which a solution requires $\Omega(n \log n)$ time on an input of size $n$ in the algebraic computation tree model \cite{Ben83}.  We can show the following:
\begin{theorem}
\label{thm:hardnesspath}
Given a straight-line drawing of a path $P=(v_1,v_2,\ldots,v_n)$ in $\mathbb{R}^3$, at least $\Omega(n\log n)$ time is required in the algebraic computation tree model to test whether $P$ is self-approaching.
\end{theorem}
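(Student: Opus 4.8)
The plan is to reduce from the set intersection problem. An instance consists of two lists $A=(a_1,\dots,a_n)$ and $B=(b_1,\dots,b_n)$ of real numbers, and we must decide whether $\{a_1,\dots,a_n\}\cap\{b_1,\dots,b_n\}\neq\emptyset$; by \cite{Ben83} this requires $\Omega(n\log n)$ operations in the algebraic computation tree model. I would show how to transform such an instance, using only $O(n)$ arithmetic operations, into an explicit straight-line drawing of a directed path $P=P(A,B)$ on $O(n)$ vertices in $\mathbb{R}^3$, so that testing whether $P$ is self-approaching answers the set intersection instance; concretely, I would arrange that $P$ is self-approaching precisely when some $a_i$ equals some $b_j$. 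Since the transformation costs only $O(n)$ time, any algorithm that tests self-approachingness of an $n$-vertex path in time $T(n)$ would give an $O(n)+T(n)$-time algorithm for set intersection, and hence $T(n)=\Omega(n\log n)$. It is essential here that the construction run in $o(n\log n)$ time, so it may not sort the values or otherwise compare all the $a_i$ against all the $b_j$: that work must be forced onto the self-approaching test.

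The construction and its analysis are governed by Corollary~\ref{cor:edgeperpend}, in its three-dimensional form: $P=(v_1,\dots,v_m)$ is self-approaching iff for every $i$ the hyperplane through $v_i$ orthogonal to $\overline{v_{i-1}v_i}$ leaves all of $v_{i+1},\dots,v_m$ on the side away from $v_{i-1}$. I would use two simple facts repeatedly. First, if all edge vectors of a subpath lie in a cone of half-angle strictly below $45^{\circ}$ about a common axis, then every orthogonality condition internal to that subpath holds strictly, since each dot product $(v_j-v_i)\cdot(v_i-v_{i-1})$ is then a positive combination of positive terms; such subpaths are ``inert'' and never cause a violation, and I would build the start of $P$, all the connectors, and everything except one encoding block this way, placing that block far enough downstream that it also clears all the inert orthogonal hyperplanes. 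Second, a three-vertex path with $v_1=(0,0,0)$, $v_2=(a-b,1,0)$, $v_3=(0,1,0)$ is self-approaching iff $a=b$, because its single orthogonality condition is $(v_3-v_2)\cdot(v_2-v_1)=-(a-b)^2\le 0$; this is the ``atomic'' equality test out of which the encoding block is assembled.

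The heart of the proof — and the step I expect to be the main obstacle — is the encoding block. It must use only $\Theta(n)$ vertices while making ``$P$ is self-approaching'' equivalent to the correct Boolean combination of the $n^2$ equalities $a_i=b_j$; note that $n^2$ independent atomic gadgets would be too large, and that stacking $n$ of them in parallel slabs would make $P$ self-approaching iff \emph{all} the tested equalities hold, which is the wrong logical direction. The idea is to use a single ``$A$-block'' — a convex polygonal arc of $\Theta(n)$ vertices whose successive positions encode the $a_i$ — followed immediately by a single ``$B$-block'' whose vertices encode the $b_j$. With this layout the self-approaching test is obliged to verify, at essentially no extra cost, all $\Theta(n^2)$ orthogonality conditions pairing an $A$-block edge with a $B$-block vertex, so correctness reduces to a geometric statement: the intersection of the half-spaces cut out by the $\Theta(n)$ orthogonal hyperplanes of the $A$-block should meet the convex curve carrying the $B$-block vertices in exactly the configuration that encodes the set intersection answer. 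Realizing this — choosing the two convex curves and the scalings of the atomic equality test so that the conjunction of the $\Theta(n^2)$ conditions is equivalent to the desired predicate, while keeping $P$ exactly on the boundary of the self-approaching region so that generic (disjoint) instances genuinely produce a violation — is the delicate part of the argument, and may require settling for a set-containment variant of the predicate, which still carries an $\Omega(n\log n)$ lower bound by the same connected-components count. Once the block is in place, correctness of the reduction is immediate from Corollary~\ref{cor:edgeperpend} together with the cone fact, and the $O(n)$ bound on the construction time is clear because every vertex coordinate is a fixed bounded-complexity expression in the $a_i$ and $b_j$.
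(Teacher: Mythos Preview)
Your reduction from set intersection is the right high-level idea and matches the paper's approach, and your observation that an $A$-block followed by a $B$-block forces the self-approaching test to examine all $\Theta(n^2)$ cross-conditions is exactly the leverage one needs. The genuine gap is the direction of the equivalence. You propose to make $P$ self-approaching precisely when some $a_i=b_j$, and your atomic gadget is self-approaching iff $a=b$. This is backwards. Self-approachingness is a \emph{conjunction} of half-space conditions (Corollary~\ref{cor:edgeperpend}), so it naturally encodes a universal predicate; the clean route is to arrange that each equality $a_i=b_j$ \emph{violates} one condition, so that $P$ is self-approaching iff $A\cap B=\emptyset$. With your orientation you are trying to realize an existential statement as a conjunction of half-space tests, which is exactly why your ``encoding block'' step stalls and why you retreat toward an unspecified set-containment variant.

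The paper takes the correct direction and supplies the geometric idea you are missing. For each $a_i$ it builds a three-vertex ``cannon'' $c_i^0c_i^1c_i^2$ with $c_i^0,c_i^2$ on the $x$-axis and $c_i^1$ rotated about the $x$-axis through angle $a_i$; far downstream, for each $b_j$ it builds a three-vertex ``target'' $t_j^0t_j^1t_j^2$ with $t_j^1$ rotated through angle $b_j$. The slab $\perpslab{c_i^1c_i^2}$ is then also tilted through angle $a_i$; by pushing the targets far to the right (a large parameter $\gamma$) and invoking the integer-input form of the lower bound to guarantee a minimum gap between distinct values, one obtains that this slab meets $t_j^1$ iff $a_i=b_j$. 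All other half-space conditions hold automatically by construction. Hence $P$ is self-approaching iff $A$ and $B$ are disjoint---the universal-to-universal matching your plan lacks. Once you flip the direction, your $A$-block/$B$-block layout essentially becomes the cannons-then-targets path, and the obstacle you flagged disappears.
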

\begin{proof}
We first need a few gadgets for our reduction. Let $\beta = \pi/6$ and $\alpha = 1$.  For a point $p \in \mathbb{R}^2$, we define a \emph{cannon} $c$ at $p$ to be an embedding of a 3-vertex path $[c^0,c^1,c^2]$ where the points are located as follows:
\begin{itemize}
\item $c^0$ is placed at $p$,
\item $c^2$ is placed at $p + (1,0)$, that is, $\alpha$ units to the right of $p$, and
\item $c^1$ is placed at $p + (3/4, \sqrt{3}/4)$, on the line that meets the $x$-axis at an angle $\beta$ and passes through $c^0$, such that the angle
$\angle{c^0c^1c^2}$
is a right angle.
\end{itemize}
Similar to a cannon, a \emph{target} $t$ at point $p$ with respect to \changedagain{a line} $\ell$ is an embedding of a 3-vertex path $[t^0,t^1,t^2]$, where the points in $t$ are positioned as follows:
\begin{itemize}
\item $t^0$ is placed at $p$,
\item $t^1$ at the intersection of $\ell$ and $\ell^\prime$, where $\ell^\prime$ is the line of slope 1 passing through $t^0$, and
\item $t^2$ is placed on the $x$-axis such that the angle
$\angle{t^0t^1t^2}$   
is a right angle.
\end{itemize}

With these gadgets in hand, we now present a reduction from the set intersection problem.  Let $\mathcal{I}$ be an instance of the set intersection problem, where we are asked to check if there is a common element in sets $A$ and $B$. Using Yao's improvement to Ben-Or's lower bound constructions for algebraic computation trees \cite{Yao91}, it suffices to consider the case where $A$ and $B$ are sets of non-negative integers.  Letting $M$ be the maximum element in $A$ and $B$, we first divide each element of $A$ and $B$ by $2M/\pi$ so that both $A$ and $B$ are subsets of $[0,\pi/2]$, noting that this can be done in linear time.  Let $\varepsilon < \pi/2M$ so that $|a-b|>\varepsilon$ for all $a,b\in A \cup B$ with $a\neq b$, and let $\gamma$ be a sufficiently large constant (depending on $M$).  Using \changedagain{the elements} of $A$ and $B$, we embed a path $P=\{v_0,v_1,v_2,\ldots,v_{1+2|A|+2|B|}\}$ in $\mathbb{R}^3$ as follows:
\begin{enumerate}
\item Start with the vertex $v_0$ placed \changedagain{at} the origin.
\item For each $1 \leq i \leq |A|$, place a cannon $c_i$ in the $xy$-plane, attached to the current path, with $c_1^0 = v_0$ and $c_i^0 = c_{i-1}^2$ for $i \geq 2$.  Cannon $c_i$ \changedagain{represents} the element $a_i\in A$.  At this stage, the path should appear as a chain of $|A|$ cannons lined up along the $x$-axis.
\item Place the next vertex $v_{2|A|+1}$ of the path \changedagain{at} $(\alpha|A|+\gamma,0)$.
\item For each $1\leq i \leq |B|$, add a target $t_i$ in the $xy$-plane, placed at the end of the current path with respect to $\ell = \overline{v_0v_1}$.  Target $t_i$ \changedagain{represents} the element $b_i \in B$ and the targets, like the cannons, \changedagain{are} aligned along the $x$-axis. Figure~\ref{fig:reductionPath} shows what the path looks like at this point.
\item\label{enum:rotate1} Modify the embedding by rotating each cannon about the $x$-axis through an angle $a_i$ (in other words, relocate $p_i^1$ from $(x,3/4,0)$ to $(x,3/4\cos(a_i),3/4\sin(a_i))$).
\item Similarly, rotate each target $t_i^1$ about the $x$-axis through an angle $b_i$ by relocating $t_i^1$.
\item Let $P$ be the path obtained after these rotations.
\end{enumerate}

\begin{figure}[ht]
\begin{center}
\resizebox{6in}{!}{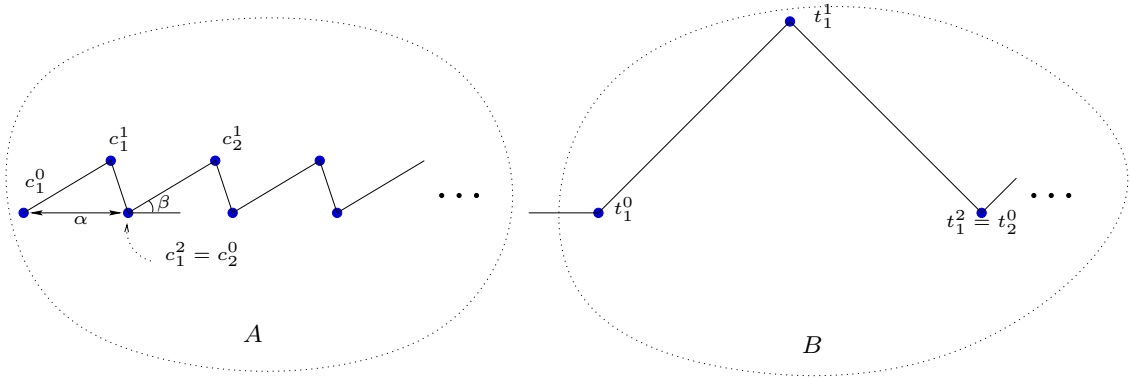}
\caption{The cannons (left) and the targets (right).}
\label{fig:reductionPath}
\end{center}

\end{figure}
\begin{figure}[ht]
\begin{center}
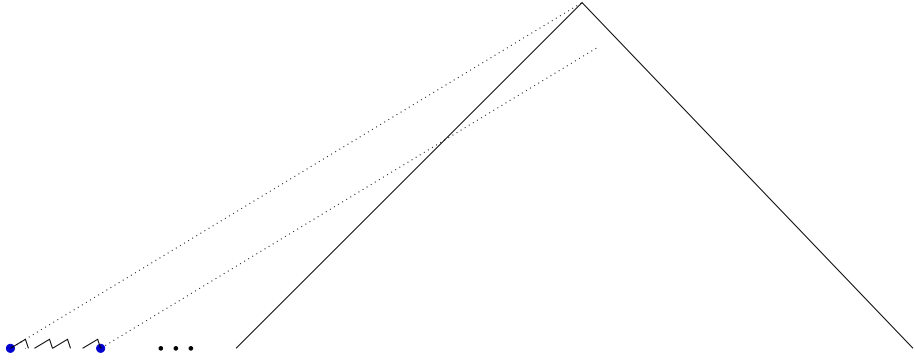
\caption{Placement of a target.}
\label{fig:reductionPath2}
\end{center}
\end{figure}

\changedagain{Our proof is based on the claim that $P$ is} a self-approaching path (in the $v_0$ to $v_{1+2|A|+2|B|}$ direction) if and only if $A$ and $B$ do not intersect. More specifically, $\perpslab{c_i^1c_i^2}$ collides with the target $t_j$ if and only if element $a_i$ equals element $b_j$.

\emph{Only if:} Assume $a_i=b_j$.  It is then easy to see that $\perpslab{c_i^1c_i^2}$ collides with the target $t_j$, since both the cannon $c_i$ and the target $t_j$ are rotated around the $x$-axis through the same angle.  It follows, by Lemma~\ref{lem:perpendIC}, that $P$ is not self-approaching.

\emph{If:} By Lemma~\ref{lem:perpendIC}, it suffices to show that if $A$ and $B$ do not intersect, then for any edge $e$ in $P$, $\perpslab{e}$ does not intersect any edges in the path after $e$. It is straightforward from our construction that the only way such an intersection can occur is if $\perpslab{c_i^1c_i^2}$ intersects a point $t_j^1$ for some $i$ and $j$.  Let $s$ be $\perpslab{c_{|A|}^1c_{|A|}^2}$ as it is positioned prior to step \ref{enum:rotate1} in the construction.
Define $\theta$ to be the minimum amount that we need to rotate the target $t_1$, so that the point $t_1^1$ does not lie in $s$. It is easy to see that $\theta$ decreases as $\gamma$ increases, and more specifically that $\mathrm{lim}_{\gamma\rightarrow \infty} \theta = 0$. Therefore, we can choose $\gamma$ large enough (with respect to $\varepsilon$), so that $\perpslab{c_i^1c_i^2}$ intersects $t_j$ if and only if $|a_i-b_j| < \varepsilon$, which, by construction, happens only when $a_i=b_j$.  The result follows.
\end{proof}

\ignore{To prove this, we build an embedding of a path in $\mathbb{R}^3$ using `cannons' and `targets', where a slab perpendicular to a `cannon' collides with a `target' if and only if the corresponding elements of the sets $A$ and $B$ are identical.}

The same construction also yields the following:
\begin{corollary}
Given a straight-line drawing of a path $P=(v_1,v_2,\ldots,v_n)$ in $\mathbb{R}^3$, at least $\Omega(n\log n)$ time is required in the algebraic computation tree model to test whether $P$ has increasing chords.
\end{corollary}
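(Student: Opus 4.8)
The plan is to recycle, verbatim, the path $P=(v_0,v_1,\ldots,v_{1+2|A|+2|B|})$ that the proof of Theorem~\ref{thm:hardnesspath} builds from an instance of set intersection, and to prove that $P$ has increasing chords if and only if $A$ and $B$ are disjoint. Since that reduction is computed in linear time, the $\Omega(n\log n)$ lower bound for set intersection in the algebraic computation tree model then transfers to the increasing-chord test exactly as it does for the self-approaching test. The main tool will be the equivalence (recorded in the Preliminaries, and also expressed by Corollary~\ref{lem:perpendIC}): a curve has increasing chords iff it and its reverse $\overline P$ are both self-approaching.

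With that in hand I would split into the two directions. If $A\cap B\neq\emptyset$, say $a_i=b_j$, then by Theorem~\ref{thm:hardnesspath} the path $P$ is already not self-approaching, and a path with increasing chords is in particular self-approaching, so $P$ has no increasing chords; nothing new is needed here. For the converse, assume $A\cap B=\emptyset$. Theorem~\ref{thm:hardnesspath} gives that $P$ itself is self-approaching, so it only remains to show that $\overline P$ is self-approaching. Applying Corollary~\ref{cor:edgeperpend} to $\overline P$, this reduces to checking, for every edge $e$ of $P$, that the open slab $\perpslab{e}$ avoids the portion of $P$ lying \emph{before} $e$ (since ``before $e$ in $P$'' is the same as ``after $e$ in $\overline P$'').

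That last check is the only real work, and it reuses the geometry already set up in Theorem~\ref{thm:hardnesspath}: the cannons occupy a bounded window of $x$-coordinates, the targets occupy another window separated from it by the large gap $\gamma$, and every cannon or target edge makes a fixed angle with the $x$-axis that is preserved by the rotations about that axis, so each slab $\perpslab{e}$ is confined in the $x$-direction and ``points outward''. Running the same case analysis as in the ``if'' part of Theorem~\ref{thm:hardnesspath}, but now also for edges whose predecessor-side is nonempty, the only configuration sensitive to the input is the collision of $\perpslab{c_i^1c_i^2}$ with $t_j^1$, which requires $a_i=b_j$ and hence cannot occur when $A\cap B=\emptyset$; all ``backward'' configurations (the slab of a target edge, of the long edge $\overline{v_{2|A|}v_{2|A|+1}}$, or of a cannon edge meeting an earlier cannon) are ruled out outright by the same separation-and-angle estimates, and one may in fact verify that $\overline P$ is self-approaching for every $A$ and $B$. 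I expect the main obstacle to be exactly this verification — confirming that no slab of an edge late in $P$ ``wraps back'' to touch an earlier cannon or target — which calls for the same careful bookkeeping with the constants $\beta$, $\gamma$, and $\varepsilon$ that appears in the proof of Theorem~\ref{thm:hardnesspath}; once it is done, $P$ has increasing chords precisely when $A\cap B=\emptyset$, and the claimed lower bound follows.
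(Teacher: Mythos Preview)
Your proposal is correct and follows exactly the approach the paper intends: the paper's entire proof is the one-line remark ``The same construction also yields the following,'' so reusing the path from Theorem~\ref{thm:hardnesspath} and arguing that it has increasing chords iff $A\cap B=\emptyset$ is precisely what is meant. Your added work---checking via Corollary~\ref{lem:perpendIC} (equivalently, via self-approachingness of the reverse path) that no slab of a later edge wraps back onto an earlier cannon or target---is the honest verification the paper leaves implicit, and your sketch of how the angle and separation constants handle the backward cases is sound.
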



\section{Finding self-approaching paths in graphs}
\label{sec:SAPathsInGraphs}

{\changed We do not know how to test in polynomial time if a given graph drawing is self-approaching.
This contrasts with the situation for greedy drawings where it suffices to find, for every pair of vertices $s$ and $t$, a ``first edge'' $(s,a)$ with $D(a,t) < D(s,t)$.
In this section we explore the problem of finding a self-approaching path between two vertices $s$ and $t$ in a graph drawing.  If we could do this in polynomial time, then we could test if a drawing is self-approaching in polynomial time.
We are unable to settle the complexity in two dimensions, but,  by employing the cannons and targets introduced in Section~\ref{sec:testingPaths}, we can show that the problem is hard in three or more dimensions:}


\begin{theorem}
\label{thm:SAPathsInGraphs}
Given a straight-line drawing of a graph $G$ in $\mathbb{R}^3$, and a pair of vertices $s$ and $t$ from $G$, it is NP-hard to determine if a self-approaching $st$-path exists. It is also NP-hard to determine if an increasing-chord $st$-path exists.
\end{theorem}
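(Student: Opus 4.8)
The plan is to reduce from a classic NP-hard problem --- most naturally 3-SAT or an equivalent constraint-satisfaction problem --- building a graph drawing in $\mathbb{R}^3$ in which the only way to route a self-approaching $st$-path is to make a sequence of binary choices that encode a satisfying assignment. The key insight is that the cannon--target construction from Section~\ref{sec:testingPaths} gives us a controllable ``collision'' mechanism: a slab $\perpslab{e}$ perpendicular to some edge $e$ either does or does not hit a later feature, and by rotating cannons/targets about an axis we can make this collision depend on whether two discrete values (angles) agree. Since in a graph (as opposed to a single path) there are many possible $st$-paths, we get genuine choice: at each ``variable gadget'' the path can proceed along one of two parallel branches, which we think of as setting a variable true or false.

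First I would set up a \emph{variable gadget}: a small diamond-shaped subgraph offering two internally disjoint subpaths from an entry vertex to an exit vertex, where the two subpaths leave at two different angles (or with a cannon oriented in one of two rotational positions about the $x$-axis), thereby ``committing'' the path to one of two later slab directions. Then I would build a \emph{clause gadget} as a collection of targets placed far down the drawing (using the $\gamma$-scaling trick from Theorem~\ref{thm:hardnesspath} so that only exact angular matches cause a collision); a clause on literals $\ell_1,\ell_2,\ell_3$ is encoded so that the path is blocked (some perpendicular slab hits a target) precisely when \emph{all three} literals are set to the value that falsifies the clause. One then has to make sure the $st$-path is forced to traverse all variable gadgets in order and then all clause gadgets in order, and that no ``unintended'' self-approaching route exists that bypasses the logic --- this is handled by the usual tricks: putting the whole thing along a dominant axis so that the overall path is essentially monotone, and keeping the gadgets geometrically separated so the only slab collisions that can ever occur are the designed variable--clause interactions. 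Finally I would argue both directions: a satisfying assignment yields a consistent set of branch choices making every perpendicular slab miss every later edge (hence self-approaching by Corollary~\ref{lem:perpendIC} / Lemma~\ref{lem:perpendSA}), and conversely any self-approaching $st$-path induces a consistent truth assignment that must satisfy every clause, since otherwise the clause gadget forces a collision.

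For the increasing-chord claim I would use the same graph. By the remark in the Preliminaries, a path has increasing chords iff it and its reverse are both self-approaching, equivalently (Corollary~\ref{lem:perpendIC}) iff every perpendicular line, not just the forward ones, meets the path only once. The gadgets should be designed symmetrically enough --- e.g.\ by attaching short ``shield'' edges or by choosing the cannon/target angles so the construction is robust under reversal --- that the forward-slab analysis already captures all possible bad intersections; then the $st$-path is self-approaching iff it has increasing chords, and the two NP-hardness statements follow from the same reduction. Alternatively, if reversal introduces spurious collisions, I would add a mirrored copy of each clause gadget on the ``$s$-side'' so that the backward slabs see an analogous obstacle pattern, at the cost of only a constant-factor blowup.

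\textbf{Main obstacle.} The hard part will be ruling out unintended self-approaching $st$-paths: in a graph drawing the routing is not forced the way it is for a single path, so I must ensure that \emph{every} choice of $st$-path either faithfully traverses the variable-then-clause structure (and thus corresponds to a truth assignment) or is immediately non-self-approaching for a trivial geometric reason. Getting the gadget geometry right so that (i) the path is essentially monotone overall, (ii) the only relevant perpendicular-slab collisions are the designed variable--clause ones, and (iii) the $\gamma$-scaling makes near-misses into true misses and exact matches into true collisions simultaneously across all clauses --- that simultaneous control of many gadgets with a single global parameter is the delicate bookkeeping step.
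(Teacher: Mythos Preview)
Your proposal is correct and matches the paper's approach almost exactly: the paper reduces from 3SAT using the cannon--target gadgets of Theorem~\ref{thm:hardnesspath}, placing two cannons (one per literal) at each variable position and three targets (one per literal) at each clause position, with rotations chosen so a traversed cannon's slab hits a target iff they encode the same literal; a self-approaching $st$-path then exists iff one can pick, for each clause, a target whose matching cannon was avoided, i.e.\ a satisfying assignment exists. For the increasing-chord statement the paper simply observes that in this particular drawing every increasing-chord $st$-path is self-approaching in the $s$-to-$t$ direction and conversely, so no mirrored copy or shield edges are needed --- your ``alternatively'' branch is unnecessary here.
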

\begin{proof}
We establish the result for the case of self-approaching paths; the proof for the increasing-chord case is similar.  We reduce from 3SAT.  Let $\mathcal{I}$ be an instance of 3SAT. Let $\{x_1,x_2,\ldots,x_n\}$ be the variables in $\mathcal{I}$. For any $1\leq k \leq n$, let the literal $y_k$ be the negation of the literal $z_k$, both associated with the boolean variable $x_k$. Let $\{w_1,w_2,\ldots,w_m\}$ be the set of clauses associated with $\mathcal{I}$, where $w_i=\{w_i^1,w_i^2,w_i^3\}$ and each literal $w_i^j$ is either $y_k$ or $z_k$ for some value of $k$.  Let $\varepsilon = \pi/2n$. We draw the graph $G$ as follows:
\begin{enumerate}
\item Place the vertex $s$ at the origin.
\item Place two cannons $c_1$ and $c_2$ corresponding to $y_1$ and $z_1$, both at $s$.
\item For all $1 < i \leq n$, place two cannons $c_{2i-1}$ and $c_{2i}$ corresponding to $y_i$ and $z_i$, both at the point $c_{2i-2}^2$ = $c_{2i-3}^2$.
\item Place a vertex $s'$ at $(\alpha n+\gamma,0)$, adjacent to $c_{2n}^2$.
\item Place three targets $t_1$, $t_2$ and $t_3$ at $s'$ with respect to the line $\overline{sc_1^1}$.
\item For all $1 \leq i \leq m$, place three targets $t_{3i-2}$, $t_{3i-1}$ and $t_{3i}$ at $t_{3i-3}^2$, with respect to the line $\overline{sc_1^1}$.
\item For all $1 \leq i \leq 2n$, rotate $c_i^1$ about $x$-axis through an angle of $i\varepsilon$.
\item For all $1 \leq i \leq m$ and $1\leq j \leq 3$, suppose that $w_i^j = y_k$ (respectively, $z_k$).  Then rotate $t_{3(i-1)+j}^1$ about the $x$-axis through an angle of $(2k-1)\varepsilon$ (respectively, $2k\varepsilon$)---in other words, rotate $t_{3(i-1)+j}^1$ through the same amount that the cannon corresponding to the value of the literal $w_i^j$ is rotated, so that a cannon `hits' a target if and only if the cannon and target correspond to the same literal.
\end{enumerate}
The rest of the proof is quite similar to the proof of Lemma~\ref{thm:hardnesspath}.  In particular, we shall show that $\mathcal{I}$ is satisfiable if and only if there is a self-approaching path from $s$ to $t_{3m}^2$.  We will reuse the following statement from the proof of Lemma~\ref{thm:hardnesspath}: for $1\leq i\leq n$, $\perpslab{c_i^1c_i^2}$ intersects the target $t_j$, if and only if $t_j^1$ and $c_i^1$ are rotated by the same amount, hence correspond to the same literal. Let $P$ be a path from $s$ to $t_{3m}^2$. Assume $P$ is a self-approaching path. For each cannon $c_i$ appearing in $P$, assign the literal corresponding to $c_i$ to be false, and its negation to be true. Then, it is easy to show that in each clause, there is at least one true literal: the one appearing in $P$. Similar to this, from a satisfying assignment of the variables, we can construct a self-approaching path by taking the cannons corresponding to false literals. For the second part of the path, we use one of the three targets assigned to each clause: one that corresponds to a true literal. This way, since each target that is traversed in $P$ corresponds to a cannon that was not traversed in $P$, $P$ would be a self-approaching path.

The same proof also works to establish NP-hardness for finding an increasing chord $st$-path.  Note that this is because the drawing of the graph is constructed in a way that any increasing-chord path connecting $s$ to $t_{3m}^2$ is a self-approaching path in the $s$-to-$t_{3m}^2$ direction and vice versa.
\end{proof}

\ignore{To prove this theorem, we reduce from 3SAT.  Our proof uses similar `cannons' and `targets' to those used in the proof of Theorem~\ref{thm:hardnesspath}, but this time, the cannons correspond to variable assignments and the targets correspond to literals in clauses.}


\section{Recognizing graphs having self-approaching drawings}
\label{sec:SADrawability}

In this section
{\changed we characterize trees that have self-approaching drawings and give a linear time recognition algorithm.
This is similar to Moitra's characterization of trees that admit greedy drawings~\cite{Moitra:Thesis:2009}.}  We begin with a simple observation about self-approaching drawings of trees.

\begin{lemma}
\label{lem:treeStrip}
In a self-approaching drawing of a tree $T$, for each edge $(u,v)$, there is no edge or vertex of $T \setminus {uv}$ that intersects $\perpslab{uv}$.
\end{lemma}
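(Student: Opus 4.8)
The plan is to exploit the fact that in a tree the path between any two vertices is unique: hence in a self-approaching drawing of $T$, for every ordered pair $x,y$ the unique $xy$-path of $T$ must itself be the self-approaching $xy$-path guaranteed by the definition, and we may apply Corollary~\ref{cor:edgeperpend} to it. We will only need this one-directional guarantee. To make the bookkeeping transparent I would fix coordinates so that $\overline{uv}$ is the $x$-axis, with $u=(0,0)$ and $v=(d,0)$ for $d=D(u,v)>0$. Then $l_{vu}=\{x=0\}$, $l_{uv}=\{x=d\}$, $l_{vu}^+=\{x\le 0\}$, $l_{uv}^+=\{x\ge d\}$, and $\perpslab{uv}=\{0<x<d\}$ is the open vertical strip; the claim to prove becomes that every vertex of $T$ has $x$-coordinate $\le 0$ or $\ge d$, and no edge other than $uv$ contains a point with $x$-coordinate strictly between $0$ and $d$.

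Next I would delete the edge $uv$; since $T$ is a tree this splits it into exactly two subtrees $T_u\ni u$ and $T_v\ni v$, whose vertex sets partition $V(T)$. Take any vertex $w\in T_v$. The unique $uw$-path of $T$ must traverse $uv$ (the only edge joining the two subtrees), and since it begins at $u$ and is simple it has the form $(u,v,\dots,w)$. This path is self-approaching in the $u\to w$ direction, so Corollary~\ref{cor:edgeperpend}, applied with $v_{i-1}=u$ and $v_i=v$, forces every vertex of the path after $u$ — in particular $w$ — to lie in $l_{uv}^+$, i.e. to have $x$-coordinate $\ge d$. Symmetrically, for any vertex $w'\in T_u$ the unique $vw'$-path has the form $(v,u,\dots,w')$ and is self-approaching in the $v\to w'$ direction, so Corollary~\ref{cor:edgeperpend} forces $w'\in l_{vu}^+$, i.e. its $x$-coordinate is $\le 0$. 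Hence every vertex of $T$ lies outside the open strip $\perpslab{uv}$; the endpoints $u,v$ themselves lie on its boundary, which is harmless since $\perpslab{uv}$ is open.

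Finally I would lift this from vertices to edges. Any edge $e\ne uv$ of $T$ has both endpoints in the same subtree, because deleting $uv$ leaves $e$ intact; say both endpoints lie in $T_v$. By the previous step both have $x$-coordinate $\ge d$, and since the segment $e$ is the convex hull of its two endpoints, every point of $e$ has $x$-coordinate $\ge d$, so $e$ misses $\perpslab{uv}=\{0<x<d\}$. The case $e\subseteq T_u$ is identical with the inequality reversed. Since the whole argument is essentially one invocation of Corollary~\ref{cor:edgeperpend}, I do not anticipate a genuine obstacle; the only points needing a little care are identifying the unique tree path with the self-approaching path promised by the definition, and the passage from vertices to edges, which relies only on the convexity of the closed half-planes $l_{uv}^+$ and $l_{vu}^+$ (equivalently, of the complement of the open strip).
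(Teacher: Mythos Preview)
Your argument is correct. The underlying idea---uniqueness of tree paths forces the guaranteed self-approaching path to be the tree path, then apply the perpendicular half-plane characterization---is the same as the paper's, but the packaging differs slightly. The paper first observes that, because the $st$-path and the $ts$-path coincide in a tree, a self-approaching tree drawing is automatically an increasing-chord drawing, and then invokes Corollary~\ref{lem:perpendIC} in one stroke. You instead split the tree at $uv$ into $T_u$ and $T_v$ and apply Corollary~\ref{cor:edgeperpend} in only one direction to each side, never passing through the increasing-chord reformulation. Your route is a bit longer but more self-contained (it needs only the one-directional self-approaching guarantee), and it makes the vertex-to-edge lift via convexity of the half-planes explicit, whereas the paper leaves that step implicit in its appeal to Corollary~\ref{lem:perpendIC}.
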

\begin{proof}
Since there is a unique path connecting vertices $s$ and $t$ in any tree $T$, a drawing of $T$ is self-approaching if and only if it has increasing chords.  The result then follows from Corollary~\ref{lem:perpendIC}.
\end{proof}


With this lemma in hand, we state the main theorem of this section.

\begin{theorem}
\label{thm:drawableTrees}
Given a tree $T$, we can decide in linear time whether or not $T$ admits a self-approaching drawing.
\end{theorem}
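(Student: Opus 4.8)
The plan is to characterize self-approaching drawable trees by a local combinatorial condition on vertex degrees and the structure of long paths, then verify that condition in linear time. By Lemma~\ref{lem:treeStrip}, a drawing of $T$ is self-approaching if and only if it has increasing chords, so we may work entirely with the increasing-chord condition (Corollary~\ref{lem:perpendIC}): each line perpendicular to the drawing of an edge or at a vertex must meet the drawing of $T$ in no other point. The first step is to understand what this forces locally. If a vertex $v$ has degree $3$ with incident edges drawn as segments, the three perpendicular lines at $v$ (one per incident edge) partition a neighborhood, and one checks that the ``forbidden slab'' regions $\perpslab{uv}$ for the incident edges, together with the analogous constraint at $v$ itself, severely restrict where the three subtrees hanging off $v$ can live. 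I expect the outcome to be that a vertex of degree $\ge 4$ is impossible (four edges out of a point cannot all avoid each other's perpendicular slabs while their pairwise angles stay bounded), so every self-approaching drawable tree has maximum degree $3$; and moreover at a degree-$3$ vertex at most a bounded-size (in fact constant) subtree can hang off at least one of the three branches — intuitively, a long path leaving $v$ will ``wrap around'' and re-enter some perpendicular slab.

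Second, I would pin down exactly which trees of maximum degree $3$ work. The natural candidates are ``caterpillar-like'' or ``spider-like'' trees: a central path (or a small central structure) with only short pendant pieces. Concretely, I conjecture the characterization is that $T$ is self-approaching drawable iff $T$ has maximum degree $3$ and every degree-$3$ vertex has at least two incident subtrees that are single edges (or paths of length $\le c$ for a small constant $c$) — equivalently, after removing leaves, $T$ becomes a path, possibly with a few extra short attachments, so that $T$ is essentially a caterpillar of bounded ``leg length.'' The positive direction is then a drawing construction: lay the spine out along a gently curving convex arc (like an arc of a circle of very large radius, so that consecutive spine edges turn by tiny angles), which by Corollary~\ref{lem:perpendIC} has increasing chords with lots of slack, and then attach each short pendant piece as a tiny ``bump'' in the direction perpendicular-ish to the spine, small enough that no new perpendicular line reaches any other part of the drawing. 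Checking that the bumps don't violate the condition is a local computation using the slack built into the convex spine.

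The negative direction — showing these conditions are necessary — is where I expect the main obstacle. One must argue that if a degree-$3$ vertex $v$ has two ``long'' branches and the structure on the third side also has some size, then no drawing can satisfy the perpendicular-line condition. The clean way is a rotation/winding argument: follow the tangent direction along a path in the drawing; the increasing-chord condition (Corollary~\ref{lem:perpendIC}) limits how much the direction can turn (this is precisely why increasing-chord curves have bounded dilation, cf.\ Rote~\cite{Rote:ICcurves:1994}), so a long path stays within a halfplane-like region, and then a second long branch plus a third branch at $v$ cannot be packed without one entering another's slab. Making ``long'' precise — i.e.\ identifying the exact constant $c$ and the exact excluded configurations — is the delicate part; I would extract it from the geometry of three segments meeting at a point together with the two-endpoint constraints $\perpslab{\cdot}$ imposed by each branch's first and last edges.

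Finally, for the algorithmic claim: once the characterization is a finite list of forbidden local patterns (a high-degree vertex, or a degree-$3$ vertex with too much structure on all three sides), checking it is a single bottom-up traversal of $T$ computing, for each vertex, the ``size/shape up to the threshold'' of its subtrees and testing the local condition; this is clearly $O(n)$. So the theorem reduces to proving the combinatorial characterization, with the necessity direction via the bounded-turning argument being the crux.
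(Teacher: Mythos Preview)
Your overall architecture is right --- characterize, then test the characterization in linear time --- and your instinct that the necessity direction comes from an angle/turning bound is exactly what the paper does (Lemma~\ref{pathdegree}: in an increasing-chord path, $k$ consecutive same-side turns contribute at least $\pi(k-1)$ in total angle). But your conjectured characterization is wrong in two concrete ways.

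First, degree~$4$ is \emph{not} impossible: four edges at right angles at a single point do satisfy the perpendicular-slab condition, and indeed any subdivision of $K_{1,4}$ admits a self-approaching drawing (two orthogonal straight lines). The angle-sum lemma only forces that there is at most one degree-$4$ vertex and all four angles there are $\pi/2$. Second, for $\Delta_T\le 3$ the drawable trees are not caterpillars with ``two short sides at every branch vertex''. The correct class is the (subgraphs of subdivisions of) \emph{windmills}: a $K_{1,3}$ whose three edges are subdivided into long paths, with a single pendant leaf hanging off every subdivision vertex. In a windmill, a typical degree-$3$ vertex has \emph{one} pendant leaf and \emph{two} large subtrees, so your ``at least two short subtrees'' test would wrongly reject it. Conversely, the single forbidden configuration is the \emph{crab}: two adjacent degree-$3$ vertices, each of whose other two neighbors is again degree~$3$ (plus their leaves). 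The impossibility of drawing the crab is obtained by summing the angle-lower-bounds of Lemma~\ref{pathdegree} over several paths through those six branch vertices and exceeding the available $12\pi$; your convex-arc caterpillar construction does not produce windmills, and your packing argument at a single degree-$3$ vertex cannot rule out two long branches since every shaft vertex of a windmill has exactly that. Once you have the right characterization (subdivision of $K_{1,4}$, or crab-free with $\Delta_T\le 3$, equivalently a sub-windmill), the linear-time check is immediate, as you say.
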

\begin{proof}
To prove this theorem, we completely characterize trees that admit self-approaching drawings.  We require two definitions of special graphs.

A \emph{windmill} having \emph{sweep length} $k$ is a tree constructed by subdividing
each edge of $K_{1,3}$ with $k-1$ new vertices
and then attaching a leaf to each subdivision vertex.
\changeAL{The three subgraphs formed by removing the central vertex of the original $K_{1,3}$ are called \emph{sweeps} and the path of $k$ vertices in each sweep is called the \emph{shaft}.}
A windmill is depicted in Figure~\ref{fig:windmillcrab}(a).

\begin{figure}
\begin{center}
\hspace*{\fill}
\subfloat[]{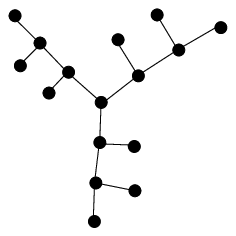}
\hspace*{\fill}
\subfloat[]{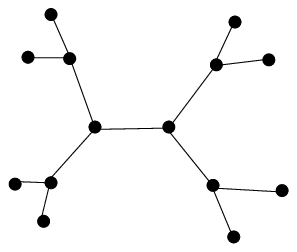}
\hspace*{\fill}
\end{center}
 \caption{(a) A windmill with sweeps of \changeAL{length $3$} and (b) the crab graph. 
 } \label{fig:windmillcrab}
\end{figure}

The \emph{crab graph} is the 14-vertex tree depicted in Figure \ref{fig:windmillcrab}(b).
A graph $G$ is \emph{crab-free} if it has no subgraph that is isomorphic to some subdivision of the crab graph.

We prove Theorem~\ref{thm:drawableTrees} in two steps.  Write $\Delta_T$ for the maximum degree of a vertex in $T$.
\begin{enumerate}
\item First we show that a tree $T$ with $\Delta_T \geq 4$ admits a self-approaching drawing if and only if $T$ is a subdivision of $K_{1,4}$.
\item Then we show that a tree $T$ with $\Delta_T \leq 3$ admits a self-approaching drawing if and only if it is a
\changeAL{subgraph of a} subdivision of a windmill, which happens if and only if $T$ is crab-free.
\end{enumerate}

To establish the first result, the following can be proved:
\begin{lemma}
\label{pathdegree}
In an increasing-chord drawing of a path, the sum of the sizes of the angles in any consecutive chain of $k$ left turns (or right turns) is at least $\pi(k-1)$ if $k>1$ and at least $\pi/2$ if $k=1$.
\end{lemma}
\begin{proof}
There is clearly no angle smaller than $\pi/2$ in any increasing-chord drawing of a path. Let $(u',u)$ and $(v,v')$ be the first and last edges of the chain. Let $s$ be the point in the plane such that
$\angle{uu's}$  
and
$\angle{vv's}$  
are right angles (See Figure~\ref{fig:anglesgame}).
\changeA{Suppose without loss of generality that $s$ lies to the left of the chain.
The path plus $s$ forms a simple counterclockwise polygon of $k+3$ vertices because $l_{uu'}$ and $l_{vv'}$ do not intersect the $uv$-path.
For the same reason,
angle $\angle{u'sv'}$ is less than $\pi$.
The sum of the internal angles of a simple polygon on $n$ vertices is $\pi(n-2)$.  Thus the sum of the angles on the left of the vertices along the $uv$-path is $\pi(k+1) - 2\pi/2 - \angle{u'sv'} \ge \pi(k-1)$.
To argue about the right side angles, note that the sum of the external angles of a simple polygon on $n$ vertices is $\pi(n+2)$.  Also the exterior angle at $s$ is at most $2\pi$.  Thus the sum of the angles on the right of the vertices along the $uv$ path is at least $\pi(k+5) - 2(3\pi/2) -2\pi = \pi k$.
}

\begin{figure}
\begin{center}
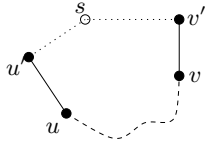
\caption{For proof of Lemma~\ref{pathdegree}.}
\label{fig:anglesgame}
\end{center}
\end{figure}
\end{proof}

\begin{corollary}
If $T$ admits a self-approaching drawing, then $\Delta_T \leq 4$. Also, if $\Delta_T = 4$, then there is only one vertex of degree $4$ in $T$, and the four angles at the vertex of degree $4$ all have size $\pi/2$, and the rest of the angles have size $\pi$.
\end{corollary}
This concludes the first step of the proof.  For the second step, we prove the following three structural lemmas, which establish the equivalence of a tree being a subdivision of a windmill, being crab-free, and admitting a self-approaching drawing.

\begin{lemma}
\label{onethreelemma}
Let $T$ be a crab-free tree with $\Delta_T \leq 3$. Then $T$ is \changeS{a subgraph of} a subdivision of a windmill.
\end{lemma}
\begin{proof}
\changeS{We say that a degree-3 vertex $s$ is \emph{canonical} if there are three disjoint paths connecting $s$ to other degree-3 vertices. For example, vertices $a$ and $b$ in Figure \ref{fig:windmillcrab}(b) are canonical. To prove the lemma we look at three cases: (a) there are two or more canonical vertices; (b) there are no canonical vertices; and (c) there is exactly one canonical vertex.

\changeAL{
a) We rule out this case by showing that if $T$ has two canonical degree-3 vertices $a$ and $b$ then it contains a subgraph that is isomorphic to the crab graph:
In the subgraph formed by deleting the $ab$ path there are
two degree-3 vertices $a_1$ and $a_2$ that have disjoint paths to $a$, and two degree-3 vertices $b_1$ and $b_2$ that have disjoint
paths to $b$.
Now it is easy to see that the minimal connected subgraph of $T$ that contains the vertices $a_1,a_2,b_1,b_2,a,b$ and their neighbours is isomorphic to a subdivision of the crab graph.
}

b) If there are no canonical vertices, then there is a path in $T$ that contains all degree $3$ vertices. Such a graph is isomorphic to a subdivision of a sweep which is a subgraph of the windmill.

c) Now it remains to show that the lemma holds if there is a single canonical vertex $s$ in $T$. Suppose $T$ is rooted at $s$ which has three children. If we remove the subtrees rooted at any two children of $s$, we are left with a graph with no canonical vertices. As we showed, such a graph is isomorphic to a subdivision of a sweep. Furthermore, $s$ is an end vertex of the sweep.  This gives us a way to decompose $T$ into three subgraphs intersecting at $s$, such that each subgraph is a subdivision of a sweep, constituting a windmill.}
\end{proof}

\begin{lemma}
\label{secondlemma}
Let $T$ be a tree that is a subdivision of a windmill.  Then $T$ admits a self-approaching drawing.
\end{lemma}
\begin{proof}
It suffices to show that any windmill admits a self-approaching drawing. We draw a $K_{1,3}$ so that each angle is $2\pi/3$ and edges are unit length.
\changeAL{From each leaf $l$, draw two rays  so that the  wedge between them has angle
$\pi/2 + \varepsilon$
for some small $\varepsilon$ and each of the angles formed by a ray and the incident edge of the $K_{1,3}$ is
$3\pi/4 - \varepsilon/2$.
It can easily be seen that \changeS{for small enough $\varepsilon$,
if we expand the wedge at $l$ by $\pi/2$ on each side then this ``wide'' wedge of angle $3\pi/2 + \varepsilon$ does not contain any part of the drawing of $K_{1,3}$ (See Figure \ref{fig:rays}).
In fact the distance of each of the two other leaves to this wedge is at least
$\sin(\pi/4-\varepsilon/2-\pi/6)$.
}}

\ignore{
\begin{figure}
\begin{center}
\begin{tikzpicture}
\path (0,0) coordinate (origin);
\path (0:1cm) coordinate (P0);
\path (P0) ++(54:3cm) coordinate (P00);
\path (P0) ++(360-54:3cm) coordinate (P01);
\path (1*120:1cm) coordinate (P1);
\path (P1) ++(120+54:3cm) coordinate (P10);
\path (P1) ++(120-54:3cm) coordinate (P11);
\path (P1) ++(120-54:0.1cm) coordinate (P1close0);
\path (P1) ++(120-54-90:0.1cm) coordinate (P1close2);
\path (P1close0) ++(120-54-90:0.1cm) coordinate (P1close1);
\path (P1) ++(120-54-90:7cm) coordinate (P1bad0);
\path (2*120:1cm) coordinate (P2);
\path (P2) ++(240+54:3cm) coordinate (P20);
\path (P2) ++(240-54:3cm) coordinate (P21);
\path (P2) ++(240+54:0.1cm) coordinate (P2close0);
\path (P2) ++(240+54+90:0.1cm) coordinate (P2close2);
\path (P2close0) ++(240+54+90:0.1cm) coordinate (P2close1);
\path (P2) ++(240+54+90:7cm) coordinate (P2bad0);
\draw (origin) -- (P0) (origin) -- (P1) (origin) -- (P2);
\draw [dashed] (P0) -- (P00) (P0) -- (P01) (P1) -- (P10) (P1) -- (P11) (P2) -- (P20) (P2) -- (P21);
\draw [dotted] (P1) -- (P1bad0) (P2) -- (P2bad0);
\draw [red] (P1close0) -- (P1close1) -- (P1close2);
\draw [red] (P2close0) -- (P2close1) -- (P2close2);
\draw [fill=red] (origin) circle (0.04cm);
\node [above right] at (origin) {$s$};
\draw [fill=red] (P0) circle (0.04cm);
\node [above left] at (P0) {$l_0$};
\draw [fill=red] (P1) circle (0.04cm);
\node [below left] at (P1) {$l_1$};
\draw [fill=red] (P2) circle (0.04cm);
\node [above left] at (P2) {$l_2$};
\end{tikzpicture}
\caption{The drawing of $s$ and its three neighbors (solid lines) along with the two rays of the neighbors (dashed).
The constructed drawing should be such that the sweep attached to $l_0$ lies completely inside the area next to $l_0$ that is bounded by dashed and dotted lines.}
\label{fig:rays}
\end{center}
\end{figure}
}

\begin{figure}
\begin{center}
\includegraphics[width=3.5in]{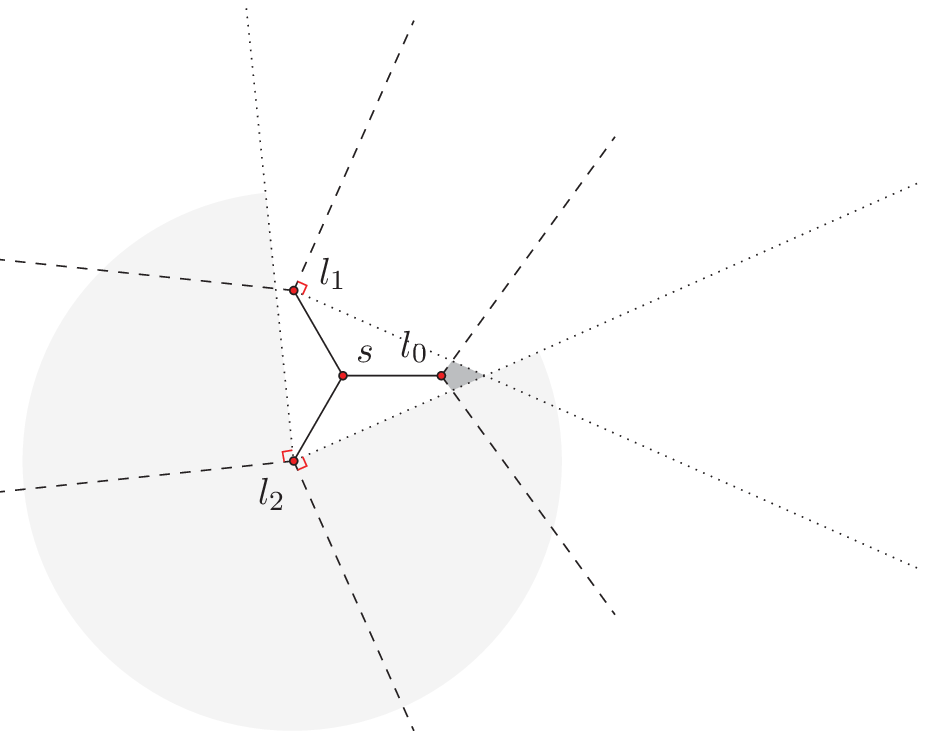}
\caption{
\changeAL{Self-approaching drawing of a windmill: The drawing of $s$ and its three neighbors (solid lines) along with the two rays at each of the neighbors (dashed). 
The wide wedge at $l_2$ is lightly shaded.  The sweep containing $l_0$ will be drawn in the darkly shaded region between the two rays at $l_0$ and outside the wide wedges at $l_1$ and $l_2$.}
}
\label{fig:rays}
\end{center}
\end{figure}

\begin{figure}
\begin{center}
\includegraphics[width=5in]{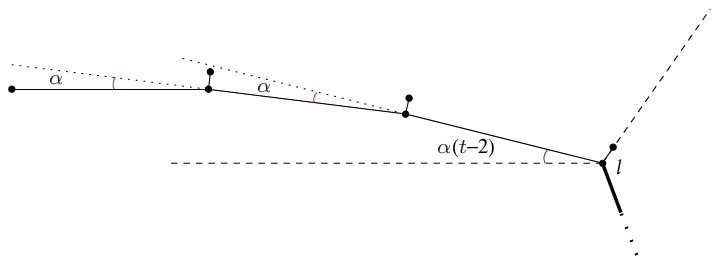}
\caption{Self-approaching drawing of a windmill: Drawing a sweep of \changeAL{length $t=4$}. The two rays are drawn using dashed segments and
\changeAL{$\alpha$ here is $\varepsilon/2(t-2)$.}}
\label{windmillembeddfig2}
\end{center}
\end{figure}

Let $\gamma$ be a number to be set later. For each leaf $l$ of the drawing of $K_{1,3}$, we draw the sweep that includes $l$ as follows. \changeAL{Assume that $l$ is part of a sweep of length $t$. We draw the sweep between the two rays at $l$ and outside the wide wedges of the other two leaves.
Furthermore,  we ensure that  the strip $l_e$ of each edge $e$ of the sweep lies inside the wide wedge at $l$.  This prevents intersections between strips of one sweep and edges of any other sweep.

We first draw the shaft of the sweep.
Draw the first edge incident to $l$ so that it has length $\gamma$ and makes an angle of  $\varepsilon/2$  with one of the rays at $l$. Continue to draw the rest of the shaft with each edge having a $\frac{\varepsilon}{2(t-2)}$ difference of direction with the previous edge and length $\gamma$ (See Figure~\ref{windmillembeddfig2}). This means that the last edge of the shaft is parallel to one of the two rays at $l$.
To ensure that the drawing stays outside the other wide wedges,
$\gamma$ can be set to $\sin(\pi/4-\varepsilon/2-\pi/6)/t$.


Next we draw the leaves of the sweep.
Draw the leaf attached to $l$ so that it is inside the reflex angle at $l$ and lies exactly on one of the rays. Then draw the rest of the leaves 
in such a way that each new edge is exactly in the middle of the reflex angle of the two incident edges of the shaft (See Figure~\ref{windmillembeddfig2}). The length of each of these new edges should be small enough so that none of them is inside the strip induced by another one. To satisfy this, the length of each such leaf can be \changeS{$\gamma\tan (\varepsilon/4t)$.}
Note that the strip of each of these edges lies inside the wide wedge at $l$.
}
\end{proof}

\begin{lemma}
\label{subcrablemma}
Let $T$ be a tree that contains a subdivision of the crab.  Then $T$ does not admit a self-approaching drawing.
\end{lemma}
\begin{proof}
It is easy to see that if a tree admits a self-approaching drawing, then any connected subgraph of it also admits a self-approaching drawing. Therefore, we only need to show that no subdivision of the crab graph has a self-approaching drawing. \changeS{First we show that the crab graph itself does not admit a self-approaching drawing.} By Lemma \ref{pathdegree}, the total size of the chain of four angles on the path from $a_{1,2}$ to $b_{1,1}$ is greater than $3\pi$. By similar arguments, the angles on the path from $a_{22}$ to $b_{22}$ also sum to $3\pi$. Similarly, by Lemma \ref{pathdegree}, the total size of the chain of three consecutive angles on the path from $a_{1,1}$ to $a_{2,1}$ is greater than $2\pi$. By similar arguments, the angles on the path from $b_{12}$ to $b_{21}$ also sum to $2\pi$. By Lemma \ref{pathdegree}, each of the four angles formed by the eight leaves has size at least $\pi/2$, summing to $2\pi$. This adds up to a total strictly greater than $3\pi+3\pi+2\pi+2\pi+2\pi = 12\pi$. Since these angles are the angles around the $6$ vertices $a, b, a_{1},a_{2},b_{1}$, and $b_{2}$, we have a contradiction.

\changeS{Now consider $C$ to be a subdivision of the crab graph. Each subdivision vertex adds a total of $2\pi$ to the both sides of the inequality, hence the contradiction holds.}
\end{proof}

Combining these results, we obtain the second step of the proof of the theorem.  This completes the characterization of all trees that admit self-approaching drawings.  To complete the proof of Theorem~\ref{thm:drawableTrees}, it suffices to observe that it is possible, in linear time, to check whether a tree $T$ is a subdivision of $K_{1,4}$ or of a windmill.
\end{proof}


\section{Constructing self-approaching Steiner networks}
\label{sec:SAspanners}
We now turn our attention to the following problem: Given a set $P$ of points in the plane, draw a graph $N$ with straight edges and $P\subseteq V(N)$ such that for each ordered pair of points $p,q\in P$ there is a self-approaching path from $p$ to $q$ in the drawing of $N$. We call the points in $V(N)\backslash P$  \emph{Steiner points} and the graph $N$ a \emph{self-approaching Steiner network for $P$}. An increasing-chord Steiner network is defined similarly.

We show that small increasing-chord Steiner networks (and hence small self-approaching Steiner networks) can always be constructed for any given set of points in the plane.

\begin{theorem}
\label{thm:steiner}
Given a set $P$ of $n$ points in the plane, there exists an increasing-chord Steiner network for $P$ having $O(n)$ vertices and edges.
\end{theorem}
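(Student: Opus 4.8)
The plan is to prove Theorem~\ref{thm:steiner} by divide and conquer on the point set, using the perpendicular-line characterization of increasing-chord curves (Corollary~\ref{lem:perpendIC}) together with two easy facts: the increasing-chord property of a polygonal path depends only on the distances $D(a,d),D(b,c)$ for points in order along it, hence is invariant under every similarity transformation (translation, rotation, reflection, scaling); and increasing-chord drawings, like self-approaching ones, are permitted to have crossing edges, so merge steps need not be planar. The skeleton: separate $P$ by a vertical line into $P_1,P_2$ with $|P_i|\le\lceil n/2\rceil$, recursively build increasing-chord Steiner networks $N_1,N_2$ for $P_1,P_2$, and splice them using only $O(1)$ additional Steiner points and edges. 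The recurrence $T(n)=2T(n/2)+O(1)$ then gives $|V(N)|,|E(N)|=O(n)$.

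For the recursion to close I would carry a strengthened inductive invariant: each $N_i$ has a distinguished \emph{port} Steiner vertex $\pi_i$, placed far from all of $P_i$, such that for every terminal $p\in P_i$ there is an increasing-chord $p\pi_i$-path and an increasing-chord $\pi_i p$-path inside $N_i$, and near $\pi_i$ all of these access paths form a thin, almost-parallel bundle of almost-straight ``corridors'' aimed along two prescribed directions (one incoming, one outgoing). Because $\pi_i$ sits far from $P_i$, the corridors are nearly straight near $\pi_i$, and each later vertex of an access path clears every earlier perpendicular $l_{v_{j-1}v_j}$ of that path by a positive margin $\mu_i$. The base case $|P_i|=1$ is handled by attaching the single point to a far-away port by one long edge (trivially increasing-chord, with corridor cone of width $0$).

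To splice $N_1$ and $N_2$ I would place $\pi_1$ and $\pi_2$ near one another, aim the outgoing corridor of $N_1$ and the incoming corridor of $N_2$ at each other, add the connector edge $\pi_1\pi_2$ (and a symmetric one for the reverse direction), and designate $\pi_1$ as the new port, extending $N_1$'s corridors through the connector into $N_2$. A $P_1\times P_1$ (resp.\ $P_2\times P_2$) pair is routed entirely inside $N_1$ (resp.\ $N_2$); a $P_1\times P_2$ path is the $N_1$-access path $p\!\to\!\pi_1$, then $\pi_1\pi_2$, then the $N_2$-access path $\pi_2\!\to\!q$, and symmetrically for $P_2\times P_1$. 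Verifying Corollary~\ref{lem:perpendIC} for these concatenations reduces to a case analysis on which pieces the four ordered points $a,b,c,d$ fall in: four points in one piece is the inductive hypothesis or similarity-invariance; for a split, the point is that since the access paths are nearly straight and nearly parallel near the ports and the two corridors are aimed at one another, the concatenated path is locally almost straight at the junction, so the margin condition survives and the required inequalities hold.

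The main obstacle is exactly this splice. Unlike settings where one may shrink a sub-solution until it is negligible, here the input points are pinned, so $N_1$ and $N_2$ must be joined at comparable scales; making every concatenated path satisfy the perpendicular condition for all four pair-types \emph{and} both routing directions forces the invariant to record not merely connectivity to a port but the quantitative ``far port plus near-straight, near-parallel corridors'' structure, and the real work is to re-establish this structure for the merged network (choosing the new port, re-aiming its corridors, controlling how the cone widths and margins degrade down the recursion). An alternative worth trying is a direct, non-recursive construction in which every terminal is attached by such a corridor to a single large circular-arc (or logarithmic-spiral) ``spine'', exploiting that a circular arc of angular extent below $\pi$ is itself increasing-chord by Corollary~\ref{lem:perpendIC}; the same near-tangency bookkeeping at the spine is then the crux.
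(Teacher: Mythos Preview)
Your divide-and-conquer scheme has a genuine gap that cannot be repaired by tightening the invariant: it collides with the bounded-dilation property of increasing-chord curves. Rote's bound says every increasing-chord $pq$-path has length at most $2.094\cdot D(p,q)$. In your merge, every cross pair $(p,q)\in P_1\times P_2$ is routed $p\to\pi_1\to\pi_2\to q$, so the path length is at least $D(p,\pi_1)+D(\pi_2,q)$. You require $\pi_1$ to be ``far from all of $P_1$'' (this is what makes the corridors nearly parallel), hence $D(p,\pi_1)$ is large for every $p\in P_1$, and similarly for $D(\pi_2,q)$. But nothing prevents some $p\in P_1$ and $q\in P_2$ from lying arbitrarily close to each other on opposite sides of the separating line; for such a pair the path length far exceeds $2.094\cdot D(p,q)$, so the concatenation cannot be increasing-chord. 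Concretely, with $P_1=\{(-1,0),(-\varepsilon,0)\}$ and $P_2=\{(\varepsilon,0),(1,0)\}$ and ports at distance $R\gg 1$, the route from $(-\varepsilon,0)$ to $(\varepsilon,0)$ has length at least $2R-O(1)$ while the endpoints are $2\varepsilon$ apart. Your spine alternative fails for the same reason: two nearby terminals routed through a distant common spine violate the dilation bound.

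The paper sidesteps this obstruction by \emph{not} funnelling all cross pairs through one gateway. It first observes that any rectilinear $xy$-monotone path is increasing-chord, builds a compressed quadtree so that every point has an $xy$-monotone path to each corner of each ancestor cell, and then uses a well-separated pair decomposition: for each of the $O(n)$ well-separated pairs $\{A_i,B_i\}$ it adds a single $2$-link monotone connector between the appropriate corners. A given pair $p,q$ is routed only through the corners of its own WSPD cells, whose diameters are $O(\varepsilon)\cdot D(p,q)$, so the detour stays controlled. Pairs whose direction is too close to horizontal or vertical (where $A_i,B_i$ may fail to be separated by both a horizontal and a vertical line) are handled by repeating the construction with the axes rotated by $\pi/4$ and taking the union. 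The essential idea you are missing is this per-pair localisation via WSPD; a single $O(1)$-size interface between the two halves cannot serve all scales of cross pairs simultaneously.
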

\begin{proof}
Given points $p$ and $q$, let $\theta_{pq}$ denote the angle between the
line $pq$ and the $x$-axis (we take the smaller of the two angles
formed, so that $\theta_{pq}\in [0,\pi/2]$).
A path is {\em $xy$-monotone\/} if every vertical line intersects the path
\changeA{in at most one point or one segment and every horizontal line intersects the path in at most one point or one segment.}
Clearly, an $xy$-monotone path is self-approaching.
\changeA{We will use rectilinear $xy$-monotone paths in our construction.}
We will build a linear-size Steiner network~$G$
with the following property:
\begin{quote}
For every pair of points $p,q\in P$ with
$\theta_{pq}\in [\pi/8,3\pi/8]$, there is
a \changeA{rectilinear} $xy$-monotone path from $p$ to $q$ in~$G$.
\end{quote}
To handle the remaining pairs of points, we can rotate the coordinate axes by $\pi/4$
and apply the same construction to obtain another Steiner network $G'$.
We can then return the union of $G$ and~$G'$.

To construct $G$, we first build a {\em quadtree}~\cite{Har-Peled:book},
defined as follows:
The root stores an initial square enclosing $P$.  At each node,
we divide its square into four congruent subsquares and create
a child for each subsquare that is not empty of points of $P$.
The tree has $n$ leaves.

To ensure that the tree has $O(n)$ internal nodes, we compress each maximal path
of degree-1 nodes by keeping only the first and last node in the path.
The result is a {\em compressed quadtree}, denoted $T$.

For each square $B$ in the compressed quadtree $T$, we add the four corner vertices
and edges of $B$ to $G$.
(Note that we allow overlapping edges in our construction;
it is not difficult to avoid overlaps by subdividing the edges appropriately.)
For each leaf square $B$ in $T$ containing a single point $p\in P$,
we add a 2-link $xy$-monotone path in $G$ from $p$ to each corner of $B$.
For each degree-1 square $B$ in $T$ having a single child square $B'$,
we add a 2-link $xy$-monotone path in $G$
from each corner of $B'$ to the corresponding corner of $B$.
By induction, it then follows
that for every point $p\in P$ inside a square $B$ in $T$,
there is an $xy$-monotone path in $G$ from $p$ to each corner of $B$.
The number of vertices and edges in $G$ thus far is $O(n)$.

Given a parameter $\varepsilon>0$,
a {\em well-separated pair decomposition\/} of $P$ is
a collection of pairs of sets $\{A_1,B_1\},\ldots,\{A_s,B_s\}$,
such that\footnote{
In the original definition~\cite{CalKos}, $A_i$ and $B_i$ are subsets of $P$,
but for our purposes, we will take $A_i$ and $B_i$ to be regions in the plane
(namely, squares).
}
\begin{enumerate}
\item for every pair of points $p,q\in P$, there is a unique index $i$
with $(p,q)\in A_i\times B_i$ or $(p,q)\in B_i\times A_i$;
\item $A_i$ and $B_i$ are {\em well-separated\/} in the sense that
both the diameter of $A_i$ and the diameter of $B_i$ is at most
$\varepsilon d(A_i,B_i)$, where $d(A_i,B_i)$ is the minimum distance between $A_i$ and $B_i$.
\end{enumerate}
It is known that a well-separated pair decomposition consisting
of $s=O(n/\varepsilon^2)$ pairs always exists~\cite{CalKos}.
Furthermore, such a decomposition
can be constructed by a simple quadtree-based algorithm (for example, see
\cite{Har-Peled:book} or \cite{Chan:wspd}), where the sets $A_i$ and $B_i$
are in fact squares appearing in the compressed quadtree $T$.

To finish the construction of $G$, we consider each pair $\{A_i,B_i\}$ in the
decomposition such that $A_i$ and $B_i$ are separated by both a vertical line
and a horizontal line.
Without loss of generality, suppose that $A_i$ is to the left of and below $B_i$.
We add a 2-link $xy$-monotone path in $G$ from the upper right
corner of $A_i$ to the lower left corner of $B_i$.
The overall number of vertices and edges in $G$ is $O(n/\varepsilon^2)$.

To show that $G$ satisfies the stated property, let $p,q\in P$
with $\theta_{pq}\in [\pi/8,3\pi/8]$.
Suppose that $(p,q)\in A_i\times B_i$.
If $A_i$ and $B_i$ are intersected by a common horizontal line, then
$\theta_{pq}$ must be upper-bounded by $O(\varepsilon)$ because $A_i$ and $B_i$
are well-separated; this is a contradiction
if we make the constant $\varepsilon$ sufficiently small.
Thus, $A_i$ and $B_i$ must be separated by a horizontal line, and similarly
by a vertical line via a symmetric argument.
Without loss of generality, suppose that $A_i$ is to the left of and below $B_i$.
By concatenating $xy$-monotone paths in $G$, we can get from $p$
to the upper right corner of $A_i$, then to the lower left corner of $B_i$,
and finally to $q$.
\end{proof}

In the above construction, the edges
we add for each well-separated pair $\{A_i,B_i\}$ may cross other edges, although
it is possible to modify the construction to
ensure that the network $G$ is planar (and similarly $G'$).  However, we do not
know how to avoid crossings in the final network
obtained by unioning $G$ and $G'$, while keeping the number of edges linear.
Our construction can be carried out in $O(n \log n)$ time, since that is the cost for building the compressed quad tree and the  well-separated pair decomposition.  The theorem generalizes to any constant dimension.

We note that our construction bears some similarity to the construction used independently by Borradaile and Eppstein~\cite{Borradaile} to create small low-weight plane Steiner spanners in which the paths stay within a bounded range of angles.


Whether planar self-approaching Steiner networks of linear size can be constructed or not is an interesting question.
Delaunay triangulations seemed to be a potential candidate, however, Figure~\ref{fig:Del-not-SA} shows a configuration of 6 points in the plane whose Delaunay triangulation is not a self-approaching drawing.

\begin{figure}
\begin{center}
\includegraphics[width=0.5\textwidth]{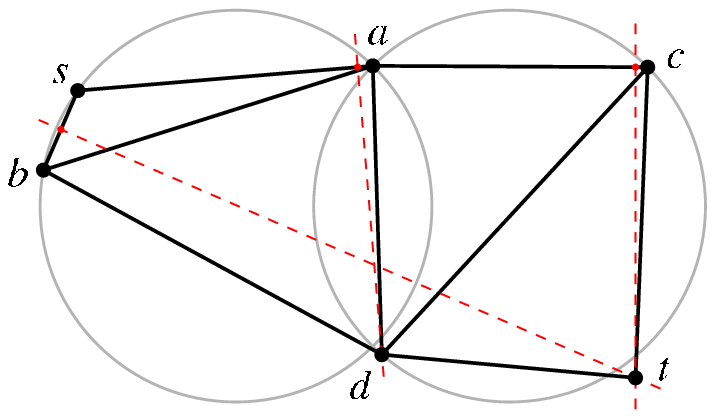}
\vspace{1cm}
\caption{The Delaunay triangulation of these six points does not have a self-approaching path from $s$ to $t$.  Forbidden edge-vertex pairs are indicated with dashed lines.  From $s$ we must take edge $sa$, because $t$ lies in the forbidden region for edge $sb$.   Then we cannot go to $d$ since it is in the forbidden region of $sa$, nor can we use edge $ac$ since $t$ is in its forbidden region.}
\label{fig:Del-not-SA}
\end{center}
\end{figure}

\section{Conclusions}

We have introduced the notion of self-approaching and increasing-chord graph drawings, with rich connections to greedy drawings, spanners, dilation and detour, and minimum Manhattan networks.

Our results are preliminary.  We leave open the following questions:
\begin{itemize}
\item Can we test, in polynomial time, if a straight-line graph drawing in the plane is self-approaching [or increasing-chord]? Or is the problem NP-complete?
{\changed
\item Given a graph $G$, can we efficiently produce a self-approaching drawing of $G$ if one exists?

\item What classes of graphs have self-approaching [or increasing-chord] drawings? Does, for example, every 3-connected planar graph have a self-approaching drawing?  Even more interesting, which graphs have a self-approaching drawing such that
local routing finds a self-approaching path?
For example, if 3-connected graphs had such drawings, this}
would have the significant implication that every 3-connected planar graph has an embedding where local routing gives paths of bounded detour (hence bounded dilation).
Bose \etal\cite{Bose:theta6:2012} recently proved the weaker result that every triangulation has an embedding where local routing gives paths of bounded dilation.
\end{itemize}


\medskip\noindent
{\bf Acknowledgements.}  Anna Lubiw would like to thank Marcus Brazil, Victor Chepoi, Matthias M\"uller-Hannemann, and Martin Zachariasen for Dagstuhl workshop discussions that inspired this line of enquiry. This work was done as part of an Algorithms Problem Session at the University of Waterloo, and we thank the other participants for helpful discussions.  We thank Prosenjit Bose and Pat Morin for help finding the example in Figure~\ref{fig:Del-not-SA}.

\bibliographystyle{abbrv}
\bibliography{SAdrawings}

\begin{thebibliography}{10}

\bibitem{Angelini:MonoDraw:2012}
P.~Angelini, E.~Colasante, G.~D. Battista, F.~Frati, and M.~Patrignani.
\newblock Monotone drawings of graphs.
\newblock {\em J. Graph Algorithms Appl.}, 16(1):5--35, 2012.

\bibitem{Angelini:MonoFixed:2011}
P.~Angelini, W.~Didimo, S.~G. Kobourov, T.~Mchedlidze, V.~Roselli, A.~Symvonis,
  and S.~K. Wismath.
\newblock Monotone drawings of graphs with fixed embedding.
\newblock In {\em Graph Drawing}, pages 379--390, 2011.

\bibitem{Angelini:2009}
P.~Angelini, F.~Frati, and L.~Grilli.
\newblock An algorithm to construct greedy drawings of triangulations.
\newblock {\em J. Graph Algorithms Appl.}, 14(1):19--51, 2010.

\bibitem{Aronov:Spanners:2008}
B.~Aronov, M.~de~Berg, O.~Cheong, J.~Gudmundsson, H.~Haverkort, M.~Smid, and
  A.~Vigneron.
\newblock Sparse geometric graphs with small dilation.
\newblock {\em Computational Geometry}, 40(3):207 -- 219, 2008.

\bibitem{Ben83}
M.~Ben-Or.
\newblock Lower bounds for algebraic computation trees.
\newblock In {\em Proc. 15th ACM Symposium on Theory of Computing}, pages
  80--86, New York, 1983.

\bibitem{BenSax}
J.~L. Bentley and J.~B. Saxe.
\newblock Decomposable searching problems {I}: {S}tatic-to-dynamic
  transformations.
\newblock {\em J. Algorithms}, 1:301--358, 1980.

\bibitem{Borradaile}
G.~Borradaile and D.~Eppstein.
\newblock Near-linear-time deterministic plane {S}teiner spanners and {TSP}
  approximation for well-spaced point sets.
\newblock In {\em Proceedings of the 24th Annual Canadian Conference on
  Computational Geometry (CCCG)}, Charlottetown, PEI, Canada, 2012.

\bibitem{Bose:theta6:2012}
P.~Bose, R.~Fagerberg, A.~van Renssen, and S.~Verdonschot.
\newblock Competitive routing in the half-$\theta_6$-graph.
\newblock In {\em Proc. 23rd ACM--SIAM Symposium on Discrete Algorithms}, pages
  1319--1328, 2012.

\bibitem{Bose:routing:2004}
P.~Bose and P.~Morin.
\newblock Online routing in triangulations.
\newblock {\em SIAM J. Comput.}, 33(4):937--951, 2004.

\bibitem{CalKos}
P.~B. Callahan and S.~R. Kosaraju.
\newblock A decomposition of multidimensional point sets with applications to
  $k$-nearest-neighbors and $n$-body potential fields.
\newblock {\em J. ACM}, 42:67--90, 1995.

\bibitem{Chan:wspd}
T.~M. Chan.
\newblock Well-separated pair decomposition in linear time?
\newblock {\em Inform. Process. Lett.}, 107:138--141, 2008.

\bibitem{Cha:merge}
B.~Chazelle.
\newblock An optimal algorithm for intersecting three-dimensional convex
  polyhedra.
\newblock {\em SIAM J. Comput.}, 21(4):671--696, 1992.

\bibitem{Chin:MMN-NP-complete:2011}
F.~Y.~L. Chin, Z.~Guo, and H.~Sun.
\newblock Minimum manhattan network is np-complete.
\newblock {\em Discrete {\&} Computational Geometry}, 45(4):701--722, 2011.

\bibitem{Dumitrescu:2009}
A.~Dumitrescu and C.~D. T\'oth.
\newblock Light orthogonal networks with constant geometric dilation.
\newblock {\em Journal of Discrete Algorithms}, 7(1):112--129, 2009.

\bibitem{Ebbers:detour:2006}
A.~Ebbers-Baumann, A.~Grune, and R.~Klein.
\newblock The geometric dilation of finite point sets.
\newblock {\em Algorithmica}, 44:137--149, 2006.
\newblock 10.1007/s00453-005-1203-9.

\bibitem{Ebbers:Dilation:2007}
A.~Ebbers-Baumann, A.~Gr{\"u}ne, R.~Klein, M.~Karpinski, C.~Knauer, and
  A.~Lingas.
\newblock Embedding point sets into plane graphs of small dilation.
\newblock {\em Int. J. Comput. Geometry Appl.}, 17(3):201--230, 2007.

\bibitem{Eppstein:Spanners:2000}
D.~Eppstein.
\newblock Spanning trees and spanners.
\newblock In J.~Sack and J.~Urrutia, editors, {\em Handbook of Computational
  Geometry}, pages 425�--461. North-Holland, 2000.

\bibitem{Giannopoulos:hard-spanners:2010}
P.~Giannopoulos, R.~Klein, C.~Knauer, M.~Kutz, and D.~Marx.
\newblock Computing geometric minimum-dilation graphs is np-hard.
\newblock {\em Int. J. Comput. Geometry Appl.}, 20(2):147--173, 2010.

\bibitem{Goodrich:2008}
M.~T. Goodrich and D.~Strash.
\newblock Succinct greedy geometric routing in the {E}uclidean plane.
\newblock In {\em Proc. 20th International Symposium on Algorithms and
  Computation}, pages 781--791, 2009.

\bibitem{Graham}
R.~L. Graham.
\newblock An efficient algorithm for determining the convex hull of a finite
  planar set.
\newblock {\em Inform. Process. Lett.}, 1:132--133, 1972.

\bibitem{Gudmundsson:smallManhattan:2007}
J.~Gudmundsson, O.~Klein, C.~Knauer, and M.~Smid.
\newblock Small {M}anhattan networks and algorithmic for the earth mover's
  distance.
\newblock In {\em Proc. 23rd European Workshop on Computational Geometry},
  pages 174--177, 2007.

\bibitem{Gudmundsson:Dilation:2007}
J.~Gudmundsson and C.~Knauer.
\newblock Dilation and detour in geometric networks.
\newblock In T.~Gonzalez, editor, {\em Handbook on Approximation Algorithms and
  Metaheuristics}. Chapman \& Hall/CRC Press, 2007.

\bibitem{Har-Peled:book}
S.~Har-Peled.
\newblock {\em Geometric Approximation Algorithms}.
\newblock AMS, 2011.

\bibitem{He:2011}
X.~He and H.~Zhang.
\newblock On succinct convex greedy drawing of 3-connected plane graphs.
\newblock In {\em Proc. 22nd ACM--SIAM Symposium on Discrete Algorithms}, pages
  1477--1486, 2011.

\bibitem{Icking:self-approachingcurves:1995}
C.~Icking, R.~Klein, and E.~Langetepe.
\newblock Self-approaching curves.
\newblock {\em Math. Proc. Camb. Phil. Soc}, 125:441--453, 1995.

\bibitem{Kir:pl}
D.~G. Kirkpatrick.
\newblock Optimal search in planar subdivisions.
\newblock {\em SIAM J. Comput.}, 12(1):28--35, 1983.

\bibitem{Leighton:2010}
T.~Leighton and A.~Moitra.
\newblock Some results on greedy embeddings in metric spaces.
\newblock {\em Discrete and Computational Geometry}, 44:686--705, 2010.

\bibitem{Moitra:Thesis:2009}
A.~Moitra.
\newblock {\em A solution to the Papadimitriou-Ratajczak conjecture}.
\newblock Massachusetts Institute of Technology, 2009.

\bibitem{Spanners}
G.~Narasimhan and M.~Smid.
\newblock {\em Geometric Spanner Networks}.
\newblock Cambridge University Press, 2007.

\bibitem{Nollenburg:2013}
M.~N{\"o}llenburg and R.~Prutkin.
\newblock Euclidean greedy drawings of trees.
\newblock In {\em Proc. 21st European Symposium on Algorithms}, 2013.

\bibitem{Papadimitriou:2005}
C.~H. Papadimitriou and D.~Ratajczak.
\newblock On a conjecture related to geometric routing.
\newblock {\em Theor. Comput. Sci.}, 344:3--14, 2005.

\bibitem{Rao:GeoRouting:2003}
A.~Rao, S.~Ratnasamy, C.~Papadimitriou, S.~Shenker, and I.~Stoica.
\newblock Geographic routing without location information.
\newblock In {\em Proc. 9th International Conference on Mobile Computing and
  Networking}, pages 96--108, 2003.

\bibitem{Rote:ICcurves:1994}
G.~Rote.
\newblock Curves with increasing chords.
\newblock {\em Mathematical Proceedings of the Cambridge Philosophical
  Society}, 115:1--12, 1994.

\bibitem{Wulff-Nilsen:Detour:2010}
C.~Wulff-Nilsen.
\newblock Computing the maximum detour of a plane geometric graph in
  subquadratic time.
\newblock {\em Journal of Computational Geometry}, 1(1):101--122, 2010.

\bibitem{Xia:Delaunay:2011}
G.~Xia.
\newblock Improved upper bound on the stretch factor of {D}elaunay
  triangulations.
\newblock In {\em Proc. 27th ACM Symposium on Computational Geometry}, pages
  264--273, 2011.

\bibitem{Yao91}
A.~C. Yao.
\newblock Lower bounds for algebraic computation trees with integer inputs.
\newblock {\em SIAM J. Comput.}, 20(4):655--668, 1991.

\end{thebibliography}

\end{document}